\def\ps@headings{%
\def\@oddhead{\mbox{}\scriptsize\rightmark \hfil \thepage}%
\def\@evenhead{\scriptsize\thepage \hfil\leftmark\mbox{}}%
\def\@oddfoot{}%
\def\@evenfoot{}}
\newtheorem{theorem}{Theorem}
\newtheorem{lemma}{Lemma}
\theoremstyle{claim}
\newtheorem{claim} {Claim}
\theoremstyle{definition}
\newtheorem{fact} {{\bf{Fact}}}
\theoremstyle{definition}
\newtheorem{definition} {Definition}
\newtheorem{remark}{Remark}
\DeclareMathOperator*{\argmin}{arg\,min}
\begin{document}
\title{Opportunistic Routing with Congestion Diversity\\ in Wireless Ad-hoc Networks}
\author{ \IEEEauthorblockN{   A. A. Bhorkar, M. Naghshvar, T. Javidi} \\
           	\IEEEauthorblockA{Department of Electrical Engineering, \\ University of California San Diego, La Jolla, CA 92093}\\
           \url{abhorkar@ucsd.edu}, \url{naghshvar@ucsd.edu}, \url{tjavidi@ucsd.edu}            
          }

 \maketitle
\begin{abstract}
We consider the problem of routing packets across a multi-hop network consisting of multiple sources of traffic and wireless links while ensuring bounded expected delay. Each packet transmission can be overheard by a random subset of receiver nodes among which the next relay is selected opportunistically. 
 The main challenge in the design of minimum-delay routing policies is balancing the trade-off between routing the packets along the shortest paths to the destination and distributing traffic according to the maximum backpressure. Combining important aspects of shortest path and backpressure routing, this paper provides a systematic development of a distributed opportunistic routing policy with congestion diversity (\mbox{D-ORCD}). 
 
 \mbox{D-ORCD} uses a measure of draining time to opportunistically identify and route packets along the paths with an expected low overall congestion.  \mbox{D-ORCD} is proved to ensure a bounded expected delay for all networks and under any admissible traffic. Furthermore, this paper proposes a practical implementation which empirically optimizes critical algorithm parameters and their effects on delay as well as protocol overhead. Realistic Qualnet simulations for 802.11-based networks demonstrate a significant improvement in the average delay over comparative solutions in the literature.  

\end{abstract}
\begin{keywords} 
wireless, ad-hoc networks, routing, congestion, implementation
\end{keywords} 
\section{Introduction}

Opportunistic routing for multi-hop wireless ad-hoc networks has long been proposed to overcome deficiencies of conventional routing \cite{Larson01,Zorzi03,Morris05,Lott06,Das05}. 
 Opportunistic routing mitigates the impact of poor wireless links by exploiting the broadcast nature of wireless transmissions and the path diversity. More precisely, the routing decisions are made in an online manner by choosing the next relay based on the actual transmission outcomes as well as a rank ordering of neighboring nodes.
 
 The authors in \cite{Lott06} provided a Markov decision theoretic formulation for opportunistic routing and a unified framework for many versions of opportunistic routing \cite{Larson01,Zorzi03,Morris05}, with the variations due to the authors' choices of costs. 
 In particular, it is shown that for any packet, the optimal routing decision, in the sense of minimum cost or hop-count, is to select the next relay node based on an index.
 This index is equal to the expected cost or hop-count of relaying the packet along the least costly or the shortest feasible path to the destination. 
When multiple streams of packets are to traverse the network, however, it might be necessary to route some packets along longer or more costly paths, if these paths eventually lead to links that are less congested. More precisely, and as noted in \cite{parul07,Neely09}, 
the opportunistic routing schemes in \cite{Larson01,Zorzi03,Morris05,Das05,Lott06} can 
potentially cause severe congestion and unbounded delays (see examples given in \cite{parul07}). 
In contrast, it is known that an opportunistic variant of  backpressure \cite{TassEph92}, diversity backpressure routing (DIVBAR) \cite{Neely09} ensures bounded expected total backlog for all stabilizable arrival rates. 
To ensure throughput optimality (bounded expected total backlog for all stabilizable arrival rates), backpressure-based algorithms \cite{TassEph92,Neely09} do something very different from \cite{Larson01,Zorzi03,Morris05,Das05,Lott06}: rather than using any 
metric of closeness (or cost) to the destination, they choose the receiver with the largest positive differential backlog 
(routing responsibility is retained by the transmitter if no such receiver exists). 
This very property of ignoring the cost to the destination, however, becomes the bane of this approach, leading to poor delay performance  in low to moderate traffic (see \cite{parul07}).  Other existing provably throughput optimal routing policies \cite{Sarkar08,Xi06,Hassibi08,Yi07}
distribute the traffic locally in a manner similar to DIVBAR and, hence, result in large delay.  

Recognizing the shortcomings of the two approaches, researchers have begun to propose solutions which combine elements of shortest path and backpressure computations \cite{Neely09,Neely11,Ying11}. 
In \cite{Neely09},  E-DIVBAR is proposed: when choosing the next relay among the set of potential
forwarders, E-DIVBAR considers the sum of the differential backlog and the
expected hop-count to the destination (also known as ETX). However, as
shown in \cite{parul07},  E-DIVBAR does not necessarily result in a better delay
performance than DIVBAR.
Instead of a simple addition used in EDIVBAR, this paper provides a distributed opportunistic routing policy with congestion diversity (\mbox{D-ORCD}) under which the congestion information is integrated with the distributed shortest path computations of \cite{Lott06}. In our previous work \cite{Naghshvar09}, ORCD, a centralized version of \mbox{D-ORCD}, is shown to be throughput optimal without discussion on system implications. In this paper, we extend the throughput optimality proof for the distributed version and discuss implementation issues in detail.
 We also tackle some of the system level issues observed in realistic settings via detailed Qualnet simulations. We then show that \mbox{D-ORCD} exhibits better delay performance than state of the art routing policies, 
namely, EXOR, DIVBAR and E-DIVBAR.  

Before we close, we emphasize that some of the ideas 
behind the design of D-ORCD have also been used as 
guiding principles in many routing solutions: some in opportunistic context\cite{Leonardi07, Ying11} and some in conventional context \cite{Neely11}. Below, we detail the similarity and differences between these solutions and our work for the sake of completeness, even though, in our study, we have chosen to focus only on solutions with comparable overhread and similar degree of 
practicality. 
 In \cite{Leonardi07}, perhaps the most related work to ours, the authors
consider a flow-level model of the network and propose a routing policy
referred to as \emph{min-backlogged-path} routing, under which the flows are
routed along the paths with minimum total backlog. 
In this light, \mbox{D-ORCD} can be viewed as a packet-based version of the min-backlogged-path 
routing without a need for the enumeration of paths across the network and costly computations
of total backlog along paths. 
In \cite{Ying11}, authors propose a modified version of backpressure which uses the shortest path information to minimize the average number of hops per packet delivery, while keeping the queues stable. 
In \cite{Neely11}, a modified throughput optimal backpressure
policy, LIFO-Backpressure, is proposed using LIFO discipline at layer 2.  
 Neither of these approaches lend themselves to practical implementations: \cite{Ying11} requires maintaining large number of virtual queues at each node increasing implementation complexity, while \cite{Neely11}  
uses atypical LIFO scheduler resulting in significant reordering of packets. Furthermore, while LIFO-Backpressure policy guarantees stability 
with minimal queue-length variations,  realistic bursty traffic in large multi-hop wireless networks may result in queue-length variations and unnecessarily high delay.

The paper is organized as follows. In Section~\ref{D-ORCDdesign}, we describe the \mbox{D-ORCD} routing algorithm.  In Section~\ref{ImpIssue}, we discuss various protocol implementation issues of \mbox{D-ORCD}.  Section~\ref{experiment} describes our simulation results in detail, where we compare the performance of various routing policies with \mbox{D-ORCD}. We then discuss theoretical guarantees of \mbox{D-ORCD} in Section~\ref{optimality}. We provide concluding remarks and discuss directions for future research work in Section~\ref{Conclusion}. The appendix contains proofs of throughput optimality of \mbox{D-ORCD} under certain assumptions on the model.

\section{Opportunistic Routing with Congestion Diversity}
\label{D-ORCDdesign}

The goal of this paper is to design a routing policy with improved delay
performance over existing opportunistic routing policies. 
In this section, we describe the guiding principle and the design of Opportunistic Routing with Congestion Diversity (\mbox{D-ORCD}). 
We  propose a time-varying distance vector, which enables the network to route packets through a neighbor with the least estimated delivery time.

 D-ORCD opportunistically routes a packet using  three stages of: (a) transmission, (b) acknowledgment, and (c) relaying.
During the transmission stage, a node transmits a packet. 
During the acknowledgment stage, each node that has successfully received the transmitted packet, sends an acknowledgment (ACK) to the transmitter node. \mbox{D-ORCD} then takes routing decisions based on a 
congestion-aware distance vector metric, referred to as the
\emph{congestion measure}. More specifically,  during the relaying stage, the relaying responsibility of the packet is shifted to a node with the  
least congestion measure among the ones that have received the packet. 
The congestion measure of a node associated with a given destination provides 
an estimate of the best possible draining time of a packet arriving at that node  
until it reaches destination.  
Each node is responsible to update its congestion measure and transmit this information to its neighbors. 
Next, we detail D-ORCD design and the computations performed at each node to update the congestion measure.

\subsection{D-ORCD Design}
\label{design}
We consider a network of $D$ nodes labelled by $\Omega = \{
 1, \ldots,D \}$. Let $p_{ij}$ be the probability that the packet transmitted by node $i$ is successfully received by node $j$.
Node $j$ is said to be \emph{reachable} by node $i$, if $p_{ij} > 0$.
The set of all nodes in the network which are reachable by node $i$ is referred to as \emph{neighborhood} of node $i$ and is denoted by $\mathcal{N}(i)$.  

\begin{table}%
\centering
\caption{Notations used in the description of the algorithm}
\begin{tabular}{|c|l|}
\hline 
Symbol  & Definition \\
\hline
 $\mathcal{N}(i)$ & Neighbours of node $i$ \\  
    & \\
 $V^d_i(t)$ &  Congestion measure at node $i$ at time $t$\\
           & \\
 $\tilde{V}_k^{(i,d)}(t)$ & Congestion measure obtained at node $i$ from $k$ \\
          & \\
  $T(t)$  & Ending time of the latest computation cycle before time $t$ \\
         & \\
  $T_c$ & Duration of the computation interval \\ 
    &  \\
     $T_s$ & Control packet transmission interval \\ 
    &  \\
  $L_i(t)$ & Local congestion at node $i$ \\
          & \\  
  $D_i(t)$ & Congestion down the stream for node $i$ \\
          & \\                
 $K_{D-ORCD}^{(i,d)}(t)$ & Selected relay for transmission at node $i$ \\
          & \\        
 $S_i(t)$ & Set of nodes receiving packet transmitted by node $i$ \\
          & \\     
${Q}_i^d(t)$ & Queue-length at node $i$ destined for $d$ at time $t$\\
            &  \\  
 $\bar{Q}_i^d(t)$ & Average queue-length at node $i$ destined for $d$ \\
            &  \\      
$P_{succ-k}^{(i,d)}(t)$      & Probability that highest priority node $k$ receives packet\\
          & \\
 $P^{(i,d)}(t)$      & Probability that at-least one higher \\
 		& priority node receives packet \\   
          & \\
 $H^{(i,d)}(t)$      & Set of higher priority nodes than node $i$ \\         
            \hline
\end{tabular}
\label{notations}
\end{table}

D-ORCD relies on a routing table at each node to determine the next best hop. 
The routing table at node $i$ consists of a list of neighbors $\mathcal{N}(i)$ and 
a structure consisting of estimated
congestion measure for all neighbors in $\mathcal{N}(i)$ associated with different destinations.  
The routing table acts as a storage and decision component at the routing layer. The routing table is updated
using a  ``virtual routing table'' at the end of every ``computational cycle'': an interval $T_c$ units of time. 
To update virtual routing table, during the progression of the computation cycle  the nodes exchange and 
compute the temporary congestion measures. 
The temporary congestion measures are computed in a fashion similar to a distributed stochastic routing computation of~\cite{Lott06} 
using the backlog information at the beginning of the computation cycle
(generalizing the computations of distributed Bellman-Ford). We conceptualize this in terms of a virtual routing table updating
and maintaining these temporary congestion measures. 
We assume that each node has a common global time to ensure that the nodes update the routing table roughly at the same time. 

 We denote the temporary congestion measure associated 
with node $i \in \Omega$ at time $t$ and destinatifon $d \in \Omega$ as $V_i^d(t)$. 
Each node $i$ computes $V_i^d(t)$  based on congestion measures $\tilde{V}_{k}^{(i,d)}(t)$ obtained via periodic communication with its neighbours $k \in  \mathcal{N}(i)$ and the queue backlog at the start of the computation cycle. 
 \mbox{D-ORCD}  stores these temporary congestion measures $\{V_i^d(t)\}_{d \in \Omega}$ and $\{\tilde{V}_{k}^{(i,d)}(t)\}_{d \in \Omega, k \in \mathcal{N}(i)}$ in the virtual  routing table.
More precisely, node $i$ periodically compute its own congestion measure and subsequently  advertises it to its neighbors using control packets at intervals of $T_s \le T_c$ seconds. Finally the actual routing table is updated using the entries 
in the virtual routing table after every $T_c$ seconds. The sequence of operations performed by D-ORCD are shown in Figs.~\ref{fig:rtable1},\ref{fig:rtable}.

Meanwhile, for routing decisions, node $i$ uses the entries in the actual routing routing table (computed during the last computation 
cycle). Let  
 $T(t) = \max_n \{nT_c: nT_c \le t, n \in \mathrm{Z}\}$ be the ending time of the latest computation cycle. 
 Then node $i$ stores $\tilde{V}_k^{(i,d)}(T(t))$ in the actual routing table and selects the next best hop $K_{D-ORCD}^{(i,d)}$ to minimize the packet's draining time, i.e. 
\begin{eqnarray}
K_{{D-ORCD}}^{(i,d)}(t) = \argmin_{k \in S_i(t) \cup i } \tilde{V}_k^{(i,d)}(T(t)), 
\end{eqnarray}
where $S_i(t)$ denotes a random set of nodes  receiving the packet transmitted by node $i$ at time $t$.  

Next, we describe the distributed computations performed during each computation cycle.

\begin{figure}
\centering
\includegraphics[width=.5\textwidth]{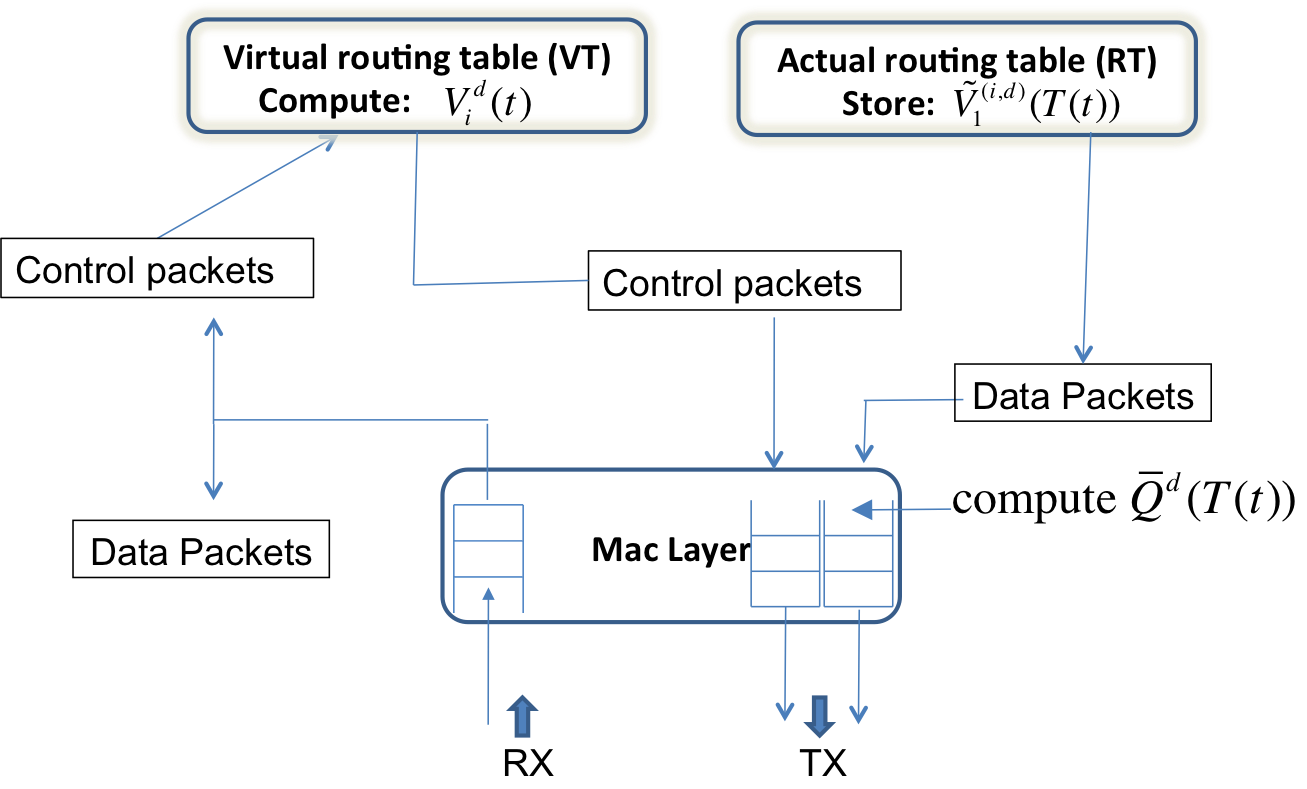}%
\caption{Operation of D-ORCD}
\label{fig:rtable1}
\end{figure}

\begin{figure}
\centering
\includegraphics[width=.5\textwidth]{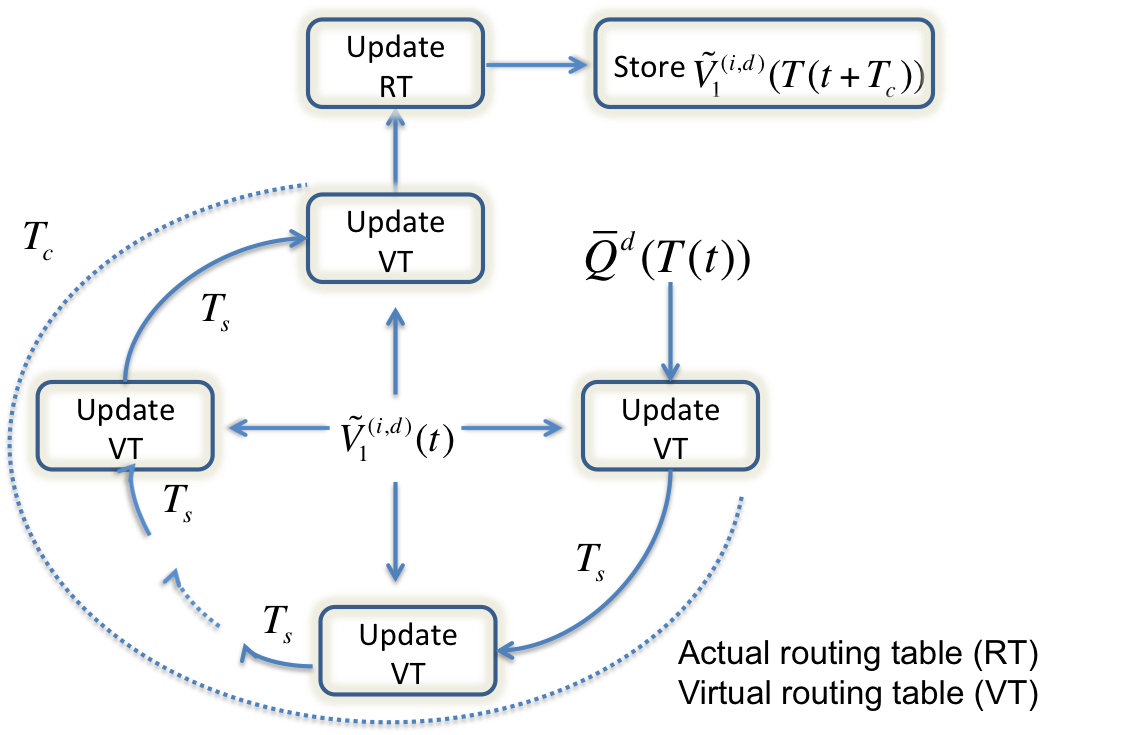}%
\caption{Actual routing table is updated every $T_c$ units of time while virtual routing table is updated after receiving any control packet }
\label{fig:rtable}
\end{figure}

\subsection{Congestion Measure Computations}

The congestion measure associated with node $i$ for a destination $d$ at time $t$ 
is the aggregate sum of the local draining time at node $i$ (denoted by $L_i^d(t)$) and the draining time from its next hop to the 
destination (denoted by $D_i^d(t)$), i.e. 
\begin{eqnarray}
\label{fixpointDec_d}
V_i^d(t) &=&  L_i^d(t) +   D_i^d(t).      
\end{eqnarray} 

Assuming a FIFO discipline at layer-2, we proceed to decompose the local draining time. This relies on the observation that 
when a packet arrives at a node, $i$, 
its waiting time is equal to the time spent in draining the packets 
that have arrived earlier plus its own transmission time. If $P^{(i,d)}(t)$ denotes the probability that the packet transmitted by node $i$ is successfully received by a node with lower congestion measure, then expected transmission time at node $i$ for the packet is given by $\frac{1}{P^{i,d}(t)}$. Let $\bar{Q}_i^{d}(t)$ denote the number of packets destined for destination $d$ averaged over previous computation cycle. 
 $\bar{Q}_i^{d}(t)$ is updated as 
\begin{equation}
\nonumber
\bar{Q}_i^{d}(t) = \frac{T_s}{T_c}\sum_{l=0}^{\frac{T_c}{T_s}-1} {Q}_i^{d}(T(t)-l). 
\end{equation} 
The local draining time for node $i$ to destination $d$ at time $t$ is approximated as,
\begin{equation}
L_i^d(t) = \frac{1}{P^{(i,d)}(t)} + \sum_{d' \in \Omega}  \frac{\bar{Q}_i^{d'}(T(t))}{P^{(i,d')}(t)}. 
\end{equation}

  \mbox{D-ORCD} computes the expected congestion measure ``down the stream'' for each node $i \in \Omega$ using the latest congestion measure $\tilde{V}_k^{(i,d)}(t)$ received  from nodes $k \in \Omega$ with lower congestion measure.  With respect to the destination $d$, a node $k \in \Omega$ is defined as a higher priority node than node $i$ if 
$\tilde{V}_k^{(i,d)}(t) < V_i^d(t)$ and the set of  higher priority nodes as $H^{(i,d)}(t)$. 
   Let  $P_{succ-k}^{(i,d)}(t)$ be the probability that node $k$ is the highest priority node to successfully hear 
node $i$ at time $t$ and $k \in H^{(i,d)}(t)$. 
 As a result, the expected congestion ``down the stream'' $D^d_i(t)$ can be given as  
 \begin{eqnarray}
D^d_i(t) = \sum_{k \in \Omega} P_{succ-k}^{(i,d)}(t) \tilde{V}^{(i,d)}_k(t). 
\end{eqnarray}

\begin{remark} 
In each computation cycle, assuming $T_c$ is large, D-ORCD computations converge to the Bellman equation 
associated with the minimum cost (``shortest path") route in a network, where the link costs
are given in terms of the queue length $\bar{Q}_i^d(t)$. 
\end{remark}
 
\begin{remark}

 If the links success probabilities have independent realizations, then for all $S \subseteq \Omega$, $P(S_i(t)=S) = \prod_{k \in S} \prod_{l \notin S} p_{ik} (1-p_{il})$. The success probabilities $P^{(i,d)}(t)$ and $P^{(i,d)}_{succ-k}(t)$ can be calculated as  
 
 \begin{align}
P^{(i,d)}(t) &=\mathop{\sum_{S: \exists k \in S\cap H^{(i,d)}(t) }}  P(S_i(t)=S), \\
P^{(i,d)}_{succ-k}(t) &= \frac{1}{P^{(i,d)}(t)} \times \mathop{\sum_{S: \tilde{V}^{(k,d)}(t) < \tilde{V}^{(k',d)}(t) }}
_{ k',k \in S \cap H^{(i,d)}(t)} 
P(S_i(t)=S). 
\end{align}

\end{remark}

\subsection{Opportunistic Routing with Partial Diversity}
\label{PD-D-ORCD}
The three-way handshake procedure discussed in Section~\ref{design}  to achieve receiver diversity gain in an opportunistic scheme  is achieved at the cost of an increase in the control overhead. In particular, it is easy to see that this overhead cost, which is the total number of ACKs sent per data packet transmission, increases
linearly with the size of the set of potential forwarders. Thus, we consider a modification of D-ORCD in the form of opportunistically 
routing with partial diversity (P-ORCD). 
This class of routing policies is parametrized by parameter $M$ denoting the maximum number of forwarder nodes. This is equivalent to a constraint on the maximum number of nodes allowed to send acknowledgment per data packet transmission.
Such a constraint will sacrifice the diversity gain, and hence the performance of any opportunistic routing policy, for lower communication overhead.

In order to implement opportunistic routing policies with partial diversity, before the transmission stage occurs  we find the set of ``best neighbors'' for each node $i$ at any time $t$, denoted by $B_i^*(t)$, where $|B^*_i(t)| \le M$. 
After transmission of a packet from node $i$ at time $t$, the routing decision is made as follows: 1)~among the nodes in $B_i^*(t) \cap S_i(t)$, select a node with the lowest congestion measure as the next forwarder; or 2) retain the packet if none of the nodes in the set $B_i^*(t)$ has received the packet. 
Next we give a mathematical formulation for modifications of \mbox{D-ORCD} with partial diversity. 

Let $\mathcal{B}$ be the collection of all subsets of $\Omega$ of size less than or equal to $M$, i.e.
$\mathcal{B} = \{ B \subseteq \Omega: |B| \le M \}$.

In \mbox{D-ORCD} protocol with partial diversity, \mbox{(PD-ORCD)},  
the corresponding quantities $\bar{V}^{d}_i(t)$ are updated as
\begin{align}
\bar{V}^{d}_i(t) = \min_{B \in \mathcal{B}} \left \{ L^d_i(t) +  \sum_{k: k \in \mathcal{B}} {P^{(i,d)}_{succ-k}(t)}  \tilde{\bar{V}}^{(i,d)}_k(t) \right \}, 
\end{align}
while the next hop is selected as  
\begin{eqnarray}
{K}_{{PD-ORCD}}^{(i,d)}(t) = \argmin_{k \in \{S_i(t)\cap B\} \cup i } \tilde{\bar{V}}_k^{(i,d)}(T(t)). 
\end{eqnarray}

 We carry out a simulation study for the delay performance of \mbox{D-ORCD} with these modifications and compare it to the delay performance of the other routing policies in Section \ref{experiment}.

\begin{remark}
When $M=1$ each node can send packets only to one of its neighbors. Therefore, this routing policy cannot take the advantage of the broadcast nature of wireless transmissions anymore, and is classified as conventional routing. 
\end{remark}

In the next section, we discuss the practical issues associated with computation of
the time-varying congestion measures $V^d_i(t)$, $i \in \Omega$.
Furthermore, we propose practical implementations and heuristics.

\section{Implementation Details: Protocol Components}
\label{implementation}
\label{ImpIssue}

In this section we discuss the implementation issues of \mbox{D-ORCD} which involves distributed and asynchronous iterative computations of $V_i^d(t)$'s.  We  provide brief discussion of the basic challenges  of \mbox{D-ORCD} including the three-way handshake procedure employed at MAC, link quality estimation, and avoidance of loops while routing.

\subsection{802.11 Compatible Implementation}
\label{80211}
\subsubsection{Three way Handshake}
The implementation of \mbox{D-ORCD}, analogous to any opportunistic routing scheme, involves the selection of a relay node among the  candidate set of nodes that have received and acknowledged a packet successfully. 
One of the major challenges in the implementation of an opportunistic routing algorithm, in general, and \mbox{D-ORCD} in particular, is the design of an 802.11 compatible acknowledgement mechanism at the MAC layer. Below we propose a practical and simple way to implement acknowledgement architecture. 

{The transmission at any node $i$ is done according to 802.11 CSMA/CA mechanism. Specially, before any transmission, transmitter $i$ performs channel sensing and starts transmission after the backoff counter is decremented to zero. }
For each neighbor node $j \in \mathcal{N}(i)$, the transmitter node $i$ then reserves a virtual time slot of duration  $T_{ACK}+T_{SIFS}$, 
where $T_{ACK}$ is the duration of the acknowledgement packet and $T_{SIFS}$ is the duration of Short Inter Frame Space (SIFS) \cite{802.11}. 
Transmitter $i$ then piggy-backs a priority ordering of nodes $\mathcal{N}(i)$ with each data packet transmitted. 
The priority ordering determines the virtual time slot in which the candidate nodes transmit their acknowledgement. 
Nodes in the set $S_i$ that have successfully received the packet then transmit acknowledgement packets sequentially in the order determined by the transmitter node. 

After a waiting time of $T_{wait} = |\mathcal{N}(i)|(T_{ACK}+T_{SIFS})$ during which each node in the set $S_i$ has had a chance to send an ACK, node $i$ transmits a FOrwarding control packet (FO).
The FO packets contain the identity of the next forwarder, which may be node $i$ itself (i.e. node $i$ retains the packet) or any node $j\in S_i$. 
If $T_{wait}$ expires and no FO packet is received (FO packet reception is unsuccessful), then the corresponding candidate nodes drop the received data packet. {If  transmitter $i$ does not receive any acknowledgement, it retransmits the packet.  The  
backoff window is doubled after every retransmission. Furthermore, the packet is dropped
if the retry limit (set to 7) is reached.  }
\psfrag{0}{\scriptsize{1}}
\psfrag{1}{\scriptsize{2}}
\psfrag{2}{\scriptsize{3}}
\psfrag{d}{\scriptsize{4}}
\psfrag{P0}{\scriptsize{$p_{23}$}}
\psfrag{P1}{\scriptsize{$p_{12}$}}
\psfrag{P2}{\scriptsize{$p_{23}$}}
\psfrag{P3}{\scriptsize{$p_{34}$}}


\begin{figure}[ht]
\centering
\includegraphics[width=0.45\textwidth]{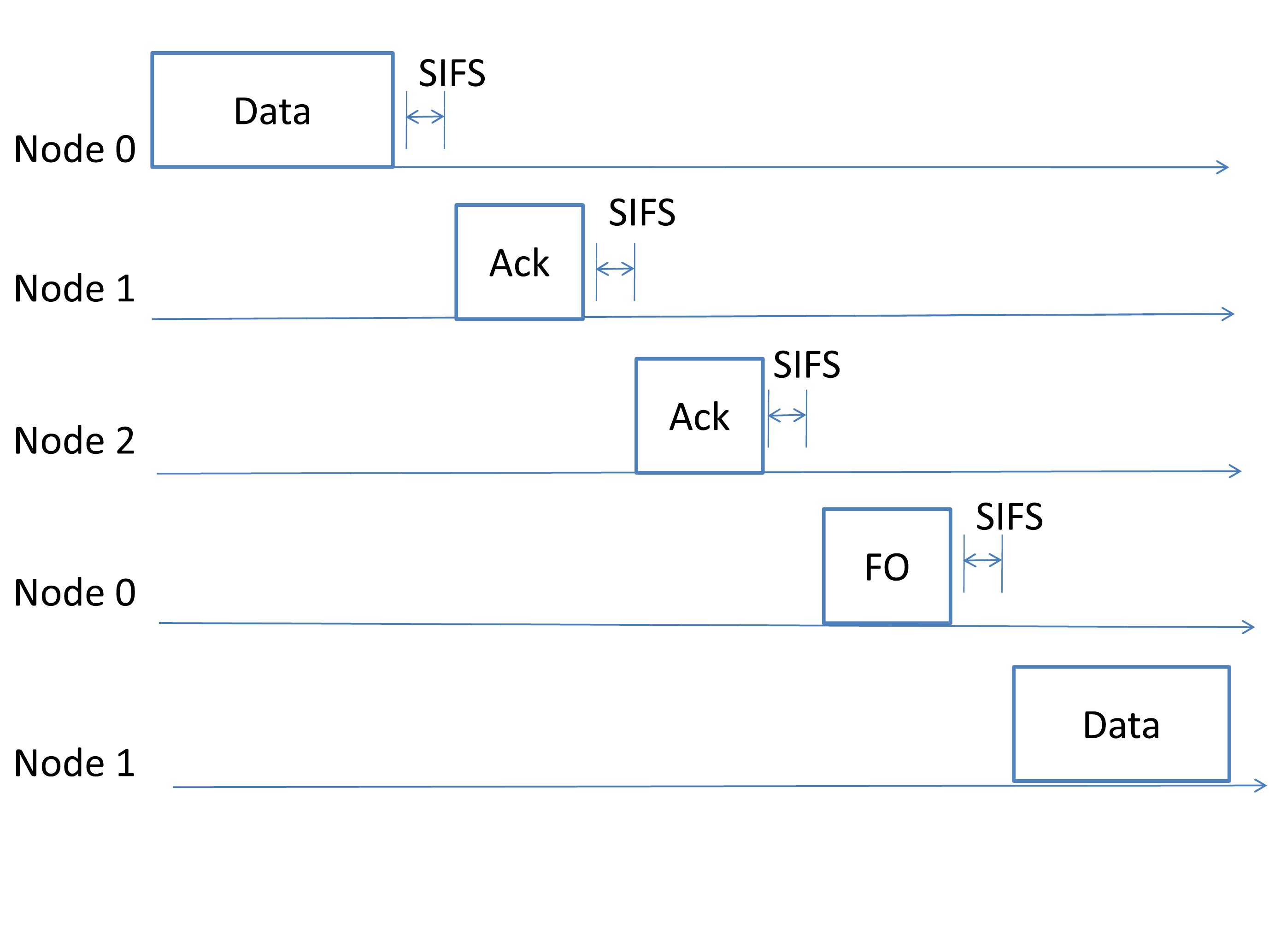}
\caption{Typical packet transmission sequence for \mbox{D-ORCD} when packet transmitted by node 0 is received by nodes 1 and 2.}
 \label{timing}
\end{figure}

\subsection{Control Packets fidelity}
\mbox{D-ORCD} depends on a reliable, frequent, and timely delivery of the control packets.  As documented in \cite{Shaikh00routingstability}, the loss of control packets 
may destabilize the algorithm operation and cause significant performance degradation for 
many well known routing algorithms.
In our implementation, we have 
taken advantage of the priority-based queuing to implement this component of the control plane. 
\mbox{D-ORCD} prioritizes the control packets by assigning them the highest strict priority,   
reducing the probability that the packets are dropped at the MAC layer and also ensuring a timely delivery of the control packets. In particular, \mbox{D-ORCD} utilizes priority queues: data packets are assigned to the lower priority queue   and control packets are assigned to the  
higher priority queue. Moreover, \mbox{D-ORCD} scheduler assigns a 
 sufficiently lower PHY rate for the control packets.

\subsection{Link Quality Estimation Protocol}
\label{LPP}
 D-ORCD computations given by (\ref{fixpointDec_d}) utilize link success probabilities $p_{ij}$ for each pair of nodes $i$, $j$. 
We now describe  method to determine the probability of successfully receiving a data packet for each pair of nodes $i,j \in \Omega$. 
It consists of two phases: active probing and passive probing. 
In the active probing, dedicated probe packets are broadcasted periodically to estimate link success probabilities. 
In passive probing, we utilize the overhearing capability of the wireless medium. The nodes are configured to promiscuous mode, hence enabling them to hear the packets from neighbors. In passive probing, the MAC layer keeps track of the number of packets received from the neighbors including the retransmissions. Finally, a weighted average is used to combine the active and passive estimates to determine the link success probabilities. Passive probing does not introduce any additional overhead cost but can be slow, while active probing rate is set independently of the data rate but introduces costly overhead.
  
\subsection{Loop Avoidance Heuristic}
D-ORCD approximates the solution to the fixed point equation via  distributed distance vector approach. The classical problem of counting to infinity \cite{Moss82} in distance vector routing can affect D-ORCD performance due to the
time varying nature of the congestion metric. 
 The problem is most acute when 
there is a sudden burst of traffic.\footnote{Similar to the broken link scenario in a typical distance vector routing.} and can cause 
 severe transient effects due to slow updates of the control packets. The looping results 
in large delays, increased interference and loss of packets.\footnote{Packet loss occurs when  time to live (TTL) value exceeds 
the number of allowed hops (typically 64)}.

 To address this issue, in our experiments 
we utilize an extension of the Split-horizon with poison reverse solution \cite{Split00} to avoid loops.
In Split-horizon with poison reverse, a node advertises routes as unreachable to the node through which they were learned.  We have extended the rule to D-ORCD by advertising the routes as unreachable to higher ranked nodes. 
 This removes most looping routes before they can propagate through the network.

\section{Simulations}
\label{experiment}
In this section, we compare the expected delay encountered by the packets in the network under various opportunistic routing policies: ExOR, DIVBAR, E-DIVBAR and D-ORCD in Qualnet simulations. 
We first investigate the performance of D-ORCD with respect to a canonical example to demonstrate D-ORCD gains \cite{parul07}. We then use a realistic topology of 16 nodes placed in a grid topology to demonstrate the robust performance improvement in practical settings. 

\subsection{The Simulation Setup}
 Our simulations are performed in QualNet. We consider two set of topologies in our experimental study: 
\begin{enumerate}
\item Canonical Example: In this example, we study the canonical example in Fig.~\ref{fig:ParulExmpl}. We motivate the performance improvement for D-ORCD by a scenario which exemplify the need to avoid congestion in the network by highlighting the shortcomings of the existing routing paradigms: shortest path and backpressure. 
\item Grid Topology: We study an outdoor wireless settings of grid topology consisting of 16 nodes separated by a distance of $200$ meters. These simulations demonstrate a robust performance gain under D-ORCD in a realistic network. 
\end{enumerate}

We now describe the parameters settings in the simulation. 
The nodes are equipped with 802.11b radios transmitting at 11 Mbps with transmission power 15 dBm. 
The wireless medium model includes Rician fading with K-factor of 4 and Log-normal shadowing with mean 4dB. 
In the canonical example path loss is determined by pathloss matrix which gives the attenuation of the received signal power with distance from the transmitter for every pair of network nodes, while for grid topology the path loss follows ITM model in \cite{Doble96}. The antenna model is the standard omnidirectional antenna model with the default settings of the simulator. The network queues are FIFO with finite buffer of 750 KB.

The acknowledgement packets are short packets of length {24 bytes} transmitted at 11 Mbps,
while FO packets are of length 20 bytes and transmitted at lower rate of 1 Mbps to ensure reliability.
If unspecified, packets are generated according to a poission modulated Markov traffic. The packets are assumed to be of length 512 bytes equipped with simple cyclic redundancy check (CRC) error detection. The control packets are transmitted periodically at an interval of $T_s=0.5$ seconds.  

We have chosen partial diversity $M=4$ and update frequency $T_c=T_s=0.5$ seconds in our experimentations.  
A discussion on the choice of parameters in the design of \mbox{D-ORCD} is provided in Section \ref{choice}.

In our study, we have compared the performance of \mbox{D-ORCD} against the state of the art routing algorithms. 
Before we proceed, we describe these candidate algorithms as well as our implementation of them. 
\begin{itemize}
\item DIVBAR \cite{Neely06}:  We implemented DIVBAR to select the next hop based on a weighted differential backlog. Specifically, let $\tilde{Q}^{(i,d)}_k(t)$ denote the latest information at node $i$ about the number of packets buffered in queue $k$ destined for destination $d$. 
For any destination $d$, DIVBAR chooses the next hop $K_{DIVBAR}^{(i,d)}(t)$, such that
    \begin{eqnarray}
     K_{{DIVBAR}}^{(i,d)}(t)  & = & \argmin_{k \in S_i(t) \cup i} (\tilde{Q}^{(i,d)}_k(t) - Q^d_i(t)). 
    \end{eqnarray}
  We have created virtual queues for each destination to identify differential backlog associated with different destinations. Note that original backpressure algorithm proposed in \cite{Neely06} is done in conjunction with a
  scheduler to maximize the network's overall weighted differential backlog as well as a mechanism to choose destination queue to be served.  
In our implementation, we serve the packets in a prioritized manner based on the destination using 802.11 MAC.  { 
  Specifically, packet with destination $m(t)$ is selected among all possible virtual queues
  such that
  \begin{eqnarray}
     \label{commodity}
      m(t)  & = & \argmin_{d} \{ \min_k (\tilde{Q}^{(i,d)}_k(t) - Q^d_i(t)) \}. 
    \end{eqnarray}
    In order to implement a  priority scheduling we utilize a priority scheduler  
    such that the packet destined for $m(t)$ is assigned higher priority queue. 
  }We have implemented the DIVBAR algorithm using a structure similar to \mbox{D-ORCD} (in which  
${V}_i^d(t)$ is replaced with $Q^d_i(t))$.  
\item ExOR  \cite{Morris05} : ExOR uses ETX metric when routing the packet without considering queuing information at the nodes. Specifically, 
for a packet destined for node $d$, the next hop $K_{ExOR}^{(i,d)}$ is chosen such that
        \begin{equation}
     K_{{ExOR}}^{(i,d)}(t)   =  \argmin_{k \in S_i(t) \cup i} ETX^{(k,d)}, 
    \end{equation}
where $ETX^{(k,d)}$ is the minimum number of transmissions from node $k$ to destination $d$ given by,
            \begin{equation}
            ETX^{(k,d)} =  \min_{j} \big \{ \frac{1}{p_{kj}} +  ETX^{(j,d)} \big \}. 
            \end{equation}    
               We have used our distributed architecture for the calculation of ETX metric by taking $Q_i(t) = 1$ for all $i\in \Omega$ and $M=1$ in the calculation of $V_i(t)$, even though, in principle, the overhead can 
               be held much lower due to the time invariant nature of node ordering. 

\item \mbox{E-DIVBAR}\cite{Neely09}: E-DIVBAR is a variant of DIVBAR, where along with the queue information, ETX metric is used for path selection.  In particular, for a packet destined for $d$, the next hop $K_{{E-DIVBAR}}^{(i,d)}$ is chosen such that 
        \begin{eqnarray}
        \nonumber
     K_{{E-DIVBAR}}^{(i,d)}(t)  &=&  \argmin_{k \in S_i(t) \cup i} \left \{ (\tilde{Q}^{(i,d)}_k(t) - Q^d_i(t)) \right. \\
     		&& + \left.  ETX^{(k,d)} \right \}. 
    \end{eqnarray}
    \mbox{E-DIVBAR} algorithm is also implemented using a structure identical to \mbox{D-ORCD} and \mbox{DIVBAR}, however, the control packets contain information about the queue-length as well as the ETX for a given destination. The commodity selection is performed using the same equation (\ref{commodity}) as DIVBAR. 
 \end{itemize}

Next, we study the canonical example where we compare the average delay encountered by packets in the network under various routing policies: \mbox{ExOR}, \mbox{DIVBAR}, \mbox{E-DIVBAR} and \mbox{D-ORCD}. The choice of the canonical network enables us to clearly reveal the high capability of D-ORCD in balancing the traffic taking advantage of path diversity in the network.

\subsection{Performance of D-ORCD: Canonical Example}

\begin{figure}
\centering
\includegraphics[width=.45\textwidth]{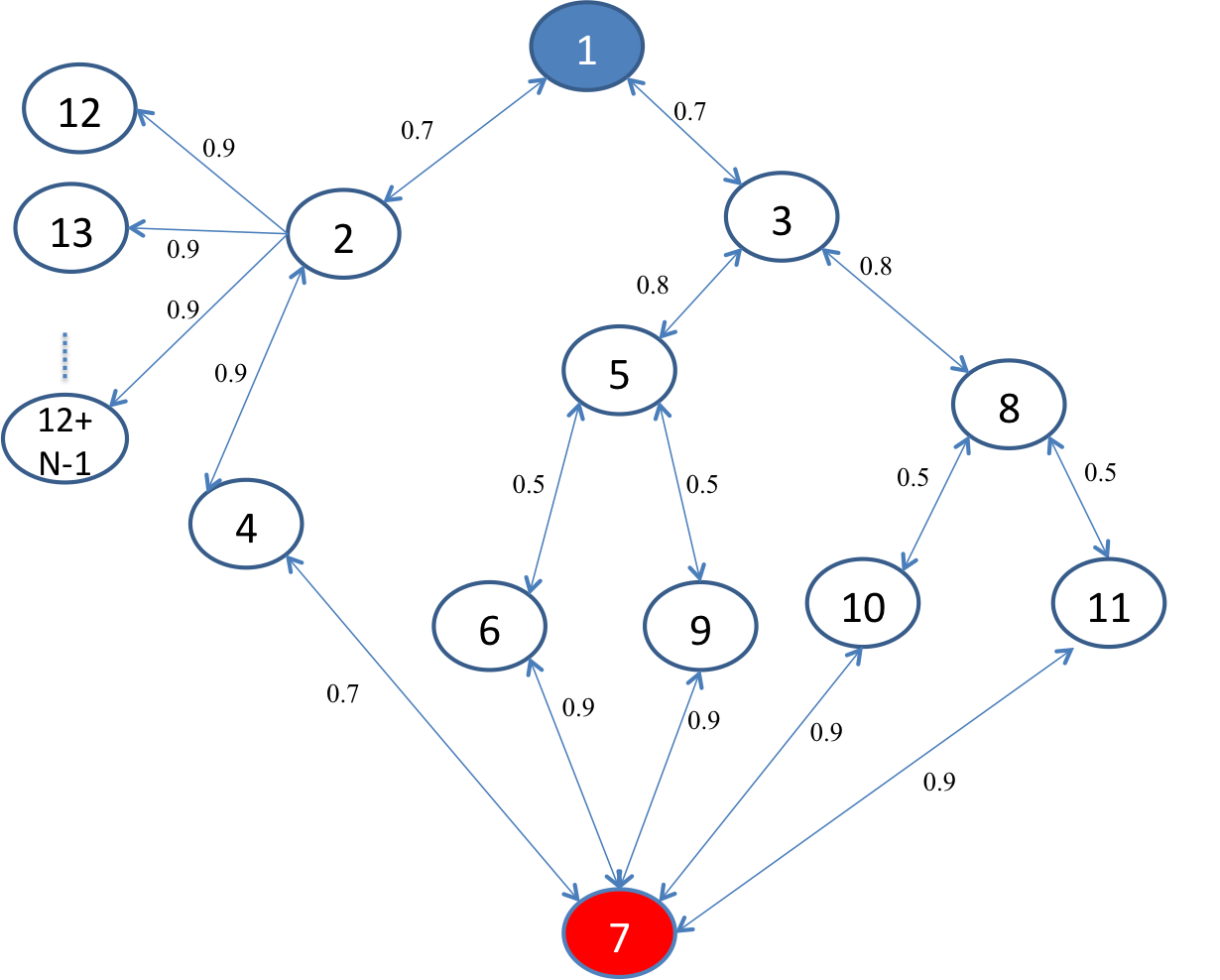}%
\caption{Structure of the canonical network from \cite{parul07}. The fractions
        on the links show the probability of successful transmission on each link. }
\label{fig:ParulExmpl}
\end{figure}

\begin{figure*}[ht]
\centering
\subfigure[Delay] {
\includegraphics[width=.45\textwidth]{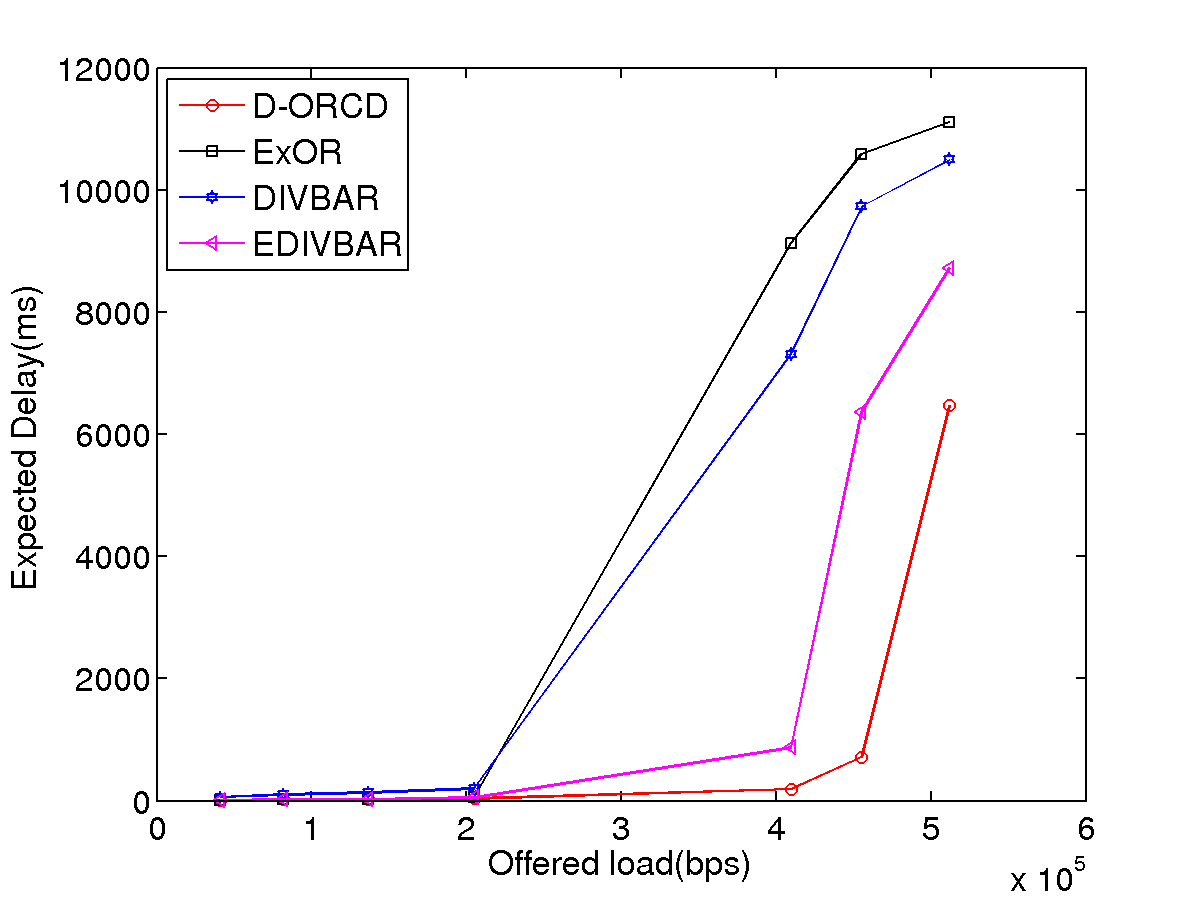}%
 }
\subfigure[Fraction of packet loss]{
\includegraphics[width=0.45\textwidth]{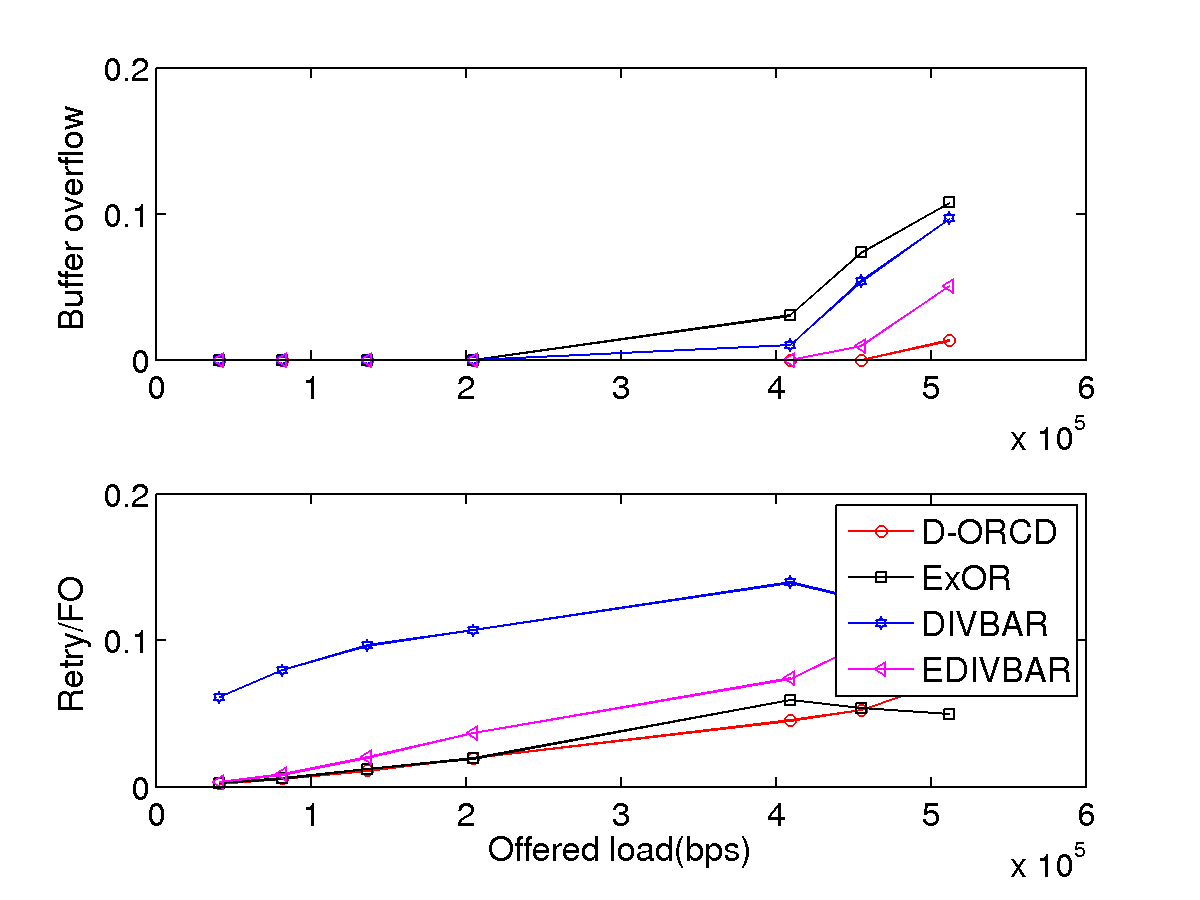}               
   \label{fig:thru}
}
\caption{Performance for Canonical Example for  $N$=2}
\label{fig:perfN2}
\end{figure*}

Consider the  network shown in Fig.~\ref{fig:ParulExmpl} which is parameterized
by  $N$. 
Nodes $12,13, \ldots, 12+(N-1)$ form a ``hole'' in the network whose size is controlled by the parameter $N$.
We now discuss the delay gains under D-ORCD 
 as parameters $N$ and $\lambda_1$ (the incoming traffic rate at node $1$) are varied and verify them in this section.

Note that the source node $1$ can route packets either through node $2$ or node $3$.
Since only node $1$ has a routing choice we focus on the delay experienced by packets originating in node $1$. 
Fig.~\ref{fig:perfN2} provide plots of the average end-to-end packet delay and the buffer overflow ratios for all the routing algorithms as the arrival rate $\lambda_1$ is varied. We observe that D-ORCD has better delay performance than the other algorithms over the range of incoming traffic rates considered. 
Fig.~\ref{fig:QueueNextHop} plots the highest priority next hop for node $1$ under the candidate protocols throughout
the duration of the experiments. 

ExOR gives higher priority to node 2 than node 3 independent of the congestion at intermediate nodes 
($ETX^{(2,7)}=2.53$ and $ETX^{(3,7)}=4.36$). 
ExOR can thus suffer from poor delay performance  as the arrival rate at node 2 approaches capacity. 
  ExOR has the worst delay performance among all the algorithms as seem in Fig~\ref{fig:QueueNextHop} particularly when the traffic load on the network is high.
  In Fig.~\ref{fig:QueueNextHop} we observe that \mbox{DIVBAR} and \mbox{E-DIVBAR} forward significant number of packets into 12,13 and 14  increasing the interference and packet drops as well as delay. 

  Next, we study the impact of  the size of the ``hole''; i.e.  $N$  on the expected per packet delay. 
 Under DIVBAR the packets that arrive at node $2$ from source $1$ are likely to be forwarded and wander between nodes $12, 13 \ldots, 12+(N-1)$ before  eventually forwarding to $4$. 
In contrast, increasing $N$ has no effect on the performance of D-ORCD.
This is because $V_2^7(t) < V_{12+i}^7(t)$, $i=0,2,\ldots,N-1$, for all time slots $t$, in effect, preventing the packets to enter the ``hole''. 
Fig.~\ref{fig:perfM2L10} provides the expected delay encountered by the source packets under various routing policies, as the size of the ``hole'', $N$, increases and the arrival rate is set to low value of $\lambda_1 = 200$ kbps. The figure shows that the average delay under D-ORCD is significantly lower than other candidate protocols as $N$ increases from $1$ to $5$.

\begin{figure}[!h]
\centering
\includegraphics[width=0.5\textwidth]{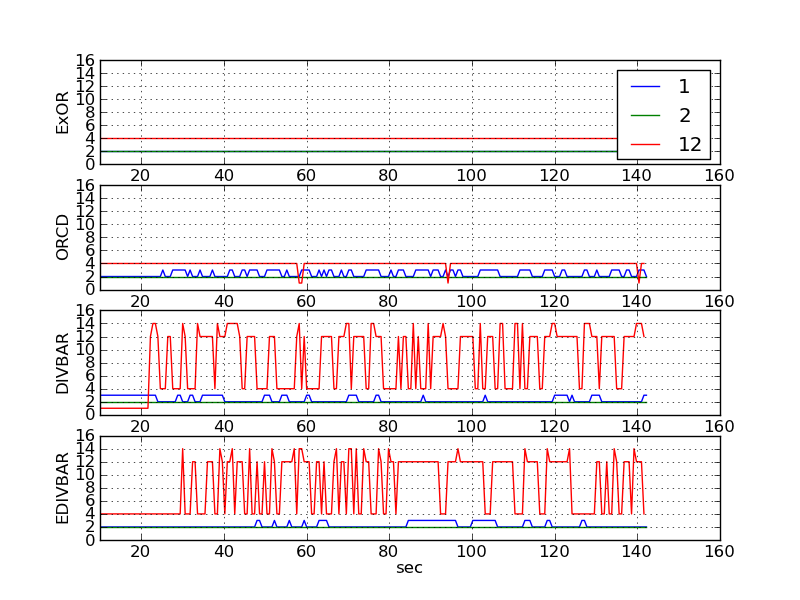}               
\caption{Highest priority nodes for Canonical Example.}
 \label{fig:QueueNextHop}
\end{figure}

\begin{figure}[ht]
\centering
\includegraphics[width=.45\textwidth]{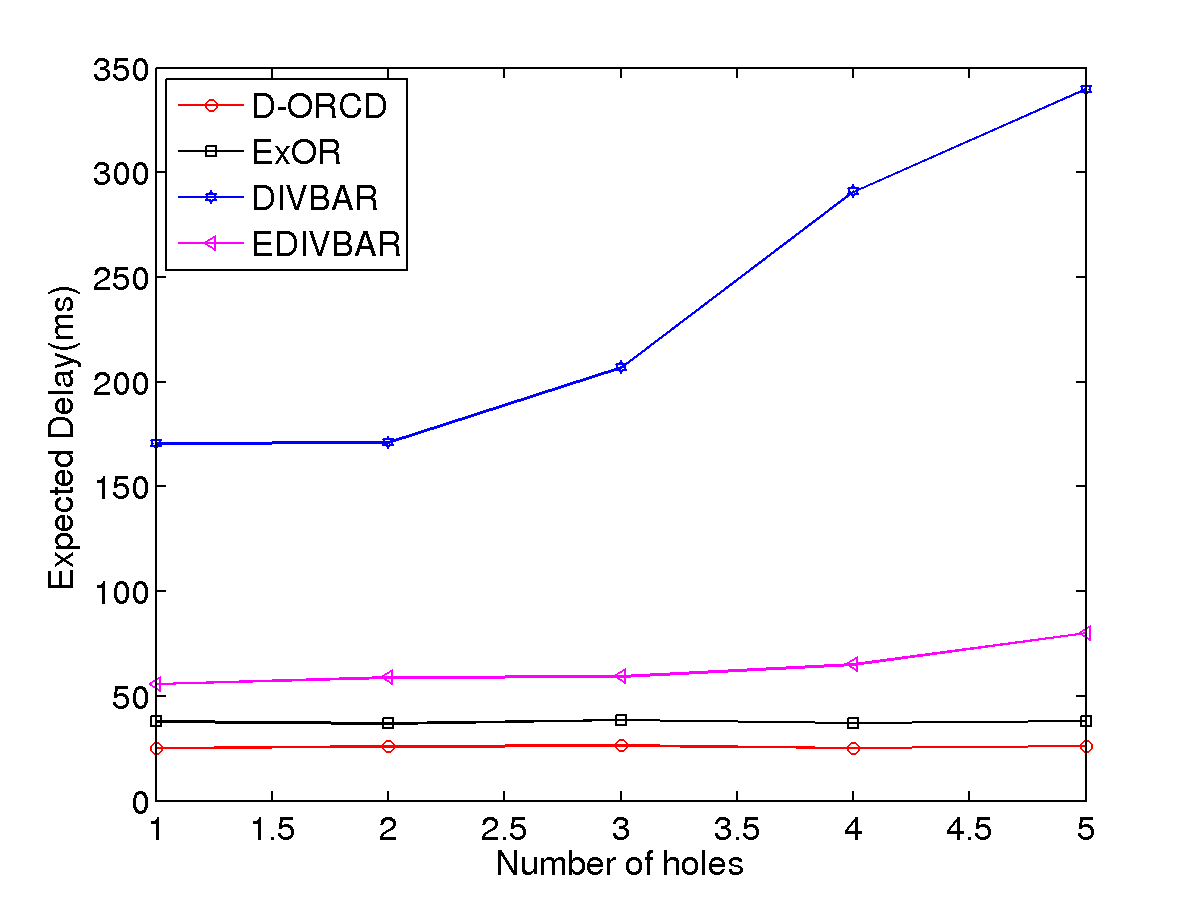}%
\caption{Performance for Canonical Example for  $\lambda$=200 kbps}
\label{fig:perfM2L10}
\end{figure}

\subsection{Performance of D-ORCD: Grid Topology}

\begin{figure}[ht]
\centering
\subfigure[Grid topoloogy. All nodes have the same arrival rate.] {
\includegraphics[width=.3\textwidth]{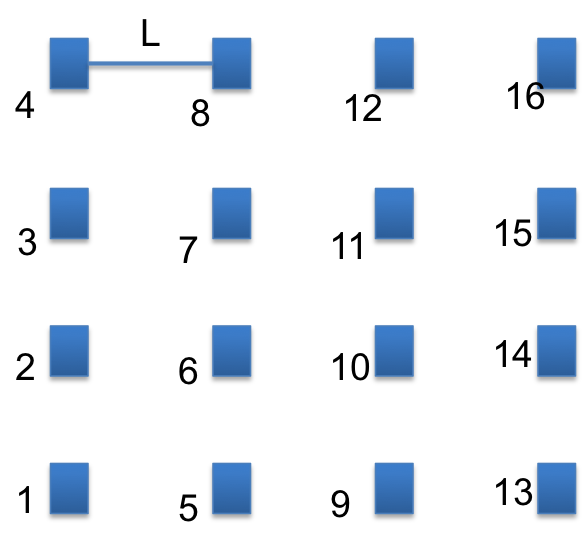}%
\label{grid1}%
}
\subfigure[Modifications to grid topology with blockage.All nodes have the same arrival rate, except node 10 does not generate traffic.] {
\includegraphics[width=.3\textwidth]{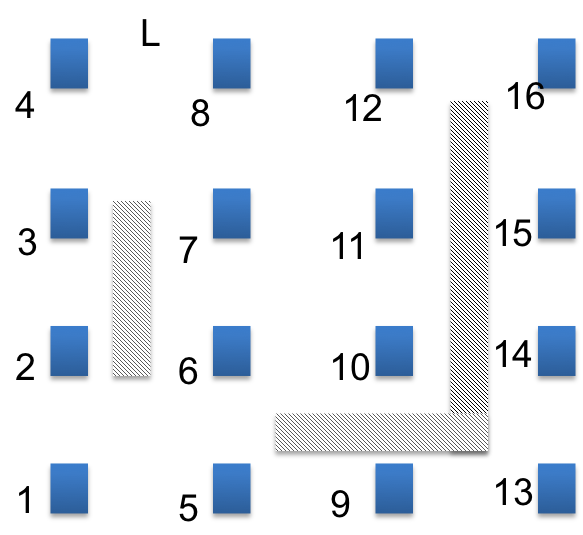}%
\label{grid2}
}
\caption{Grid topology of 16 nodes (4 x 4). Node 1 is assumed to be the destination}
\end{figure}

We perform simulations for the grid networks of 16 nodes in Fig. \ref{grid1} and \ref{grid2}. UDP Traffic is injected at each node $i \in \Omega$, with poison distributed packet arrivals.  Figure \ref{fig:grid_no_block} shows the expected delay versus the arrival rate  under various routing policies for the network in Fig. \ref{grid1}. Under ExOR, packets are always routed opportunistically along the ``shortest path'' to the destination which results in high delay under heavy traffic scenarios. On the other hand, \mbox{DIVBAR}, \mbox{E-DIVBAR}, and \mbox{D-ORCD} are throughput optimal and hence, they distribute the traffic to ensure bounded average delay for all traffic rates inside the stability region. The performance gap between DIVBAR and E-DIVBAR follows from the fact that DIVBAR does not use any metric of closeness to the destination when routing the packets; while E-DIVBAR takes into account the ETX of the nodes. A more interesting observation is the comparable performance of D-ORCD and E-DIVBAR. 
In other words, in network \ref{grid1}  the mere addition of ETX and queue measures in E-DIVBAR perform sufficiently well. 

\begin{figure}[ht]
\centering
\subfigure [Average delay per packet delivery for Network shown in Fig.\ref{grid1}]{
\includegraphics[width=.47\textwidth]{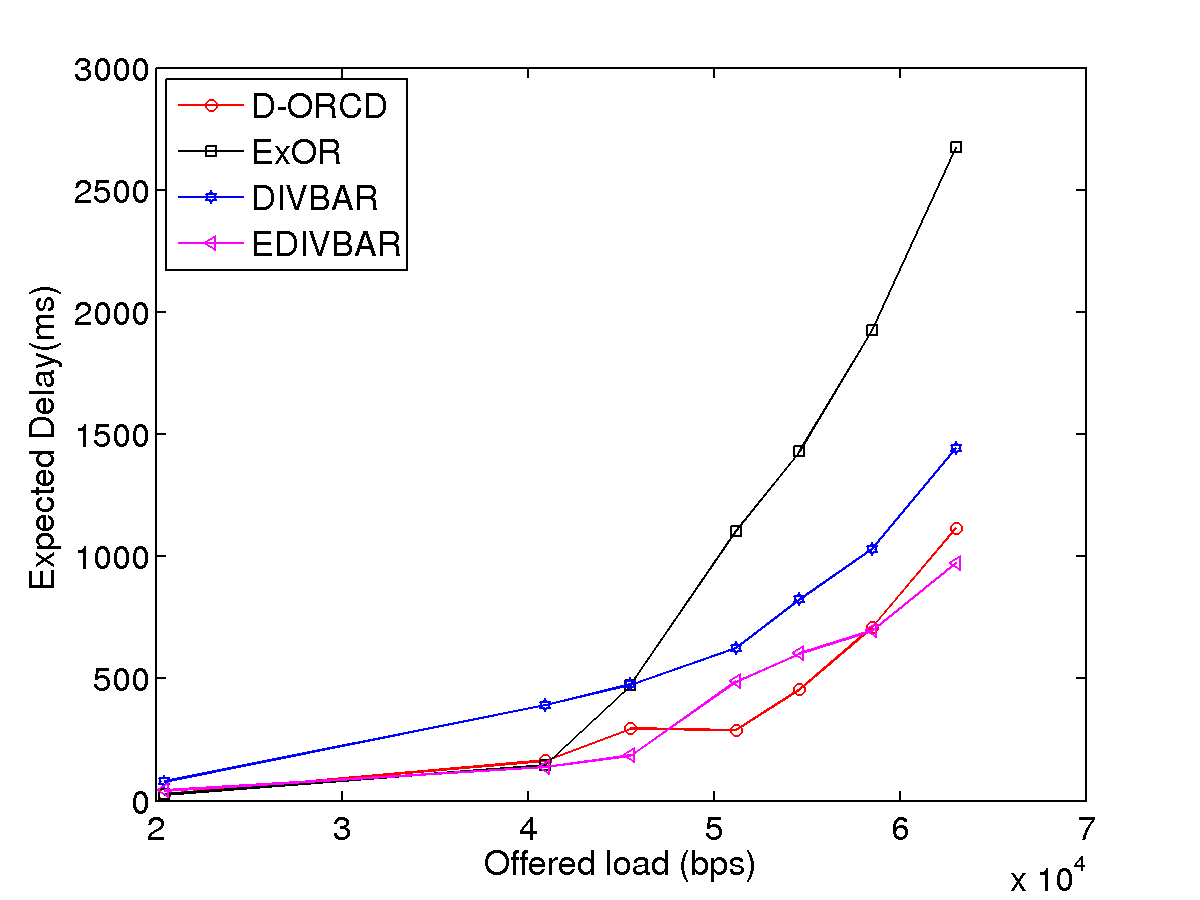}%
\label{fig:grid_no_block}
}
\subfigure [Fraction of the packets lost is dominated by FO packet loss. (Packet loss due to buffer overflow is negligible)] {
\includegraphics[width=.47\textwidth]{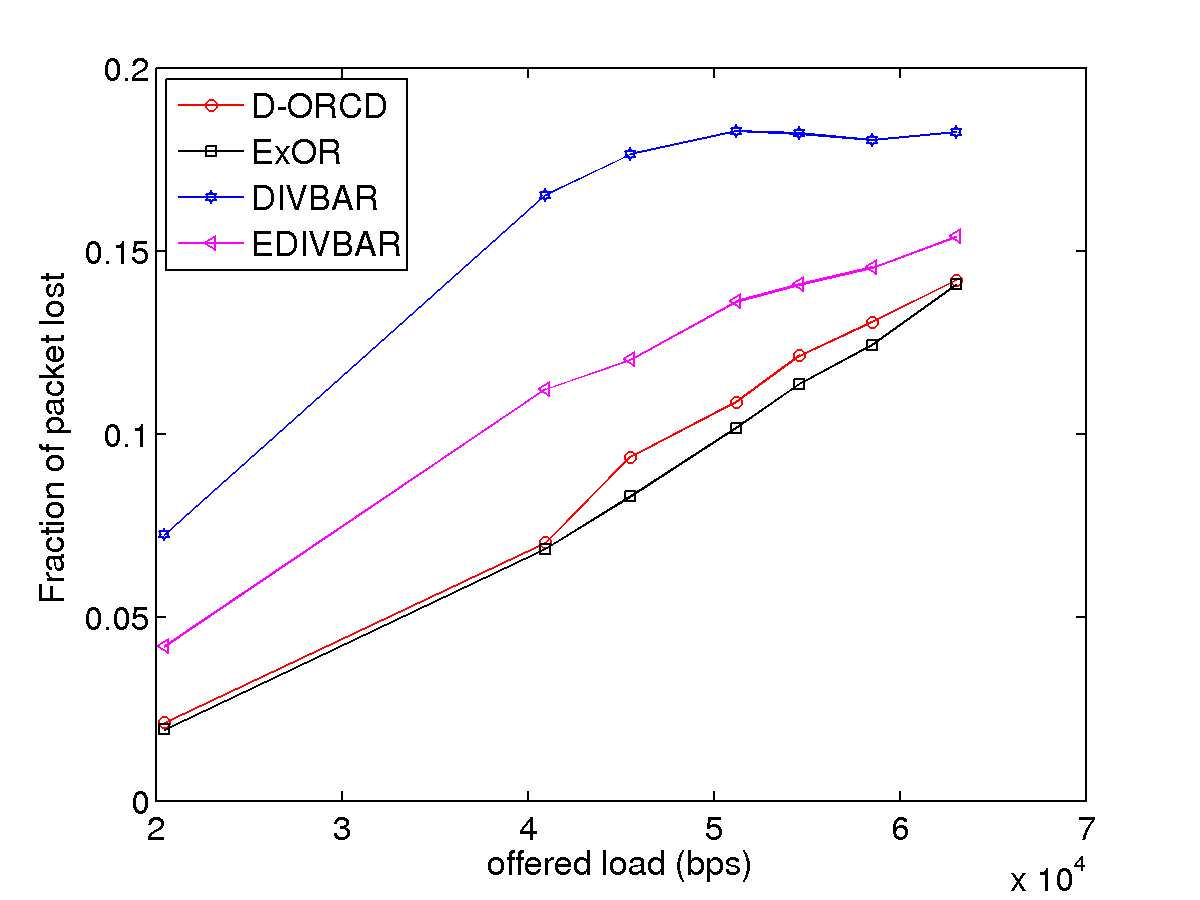}
}
\caption{Performance results for the grid topology.}
 \end{figure}

Next, we consider the network shown in Fig.\ref{grid2}, a modification of the network shown in Fig.\ref{grid1} in which link qualities are changed due to the existing barriers in the network. Figure \ref{fig:grid_block} shows the delay performance of the candidate routing policies for this network as the traffic load varies. Again, ExOR and DIVBAR show large delay. But, unlike in the case of network shown in  Fig.\ref{grid1}, the performance gap between D-ORCD and E-DIVBAR is now rather significant. The reason is that D-ORCD always route packets along the least congested paths to the destination (without assuming the network topology and the arrival traffic). In other words, the performance of E-DIVBAR exhibits high dependence on the underlying network topology and the arrival traffic: E-DIVBAR performs well in symmetric networks with equal arrival rate to all nodes (e.g. the network of Fig.\ref{grid1}), while, it performs poorly in non-symmetric networks under non-uniform traffic patterns.

\begin{figure}[th]
\centering
\subfigure [Average delay per packet delivery for Network shown in Fig.\ref{grid2}.] {
\includegraphics[width=.47\textwidth]{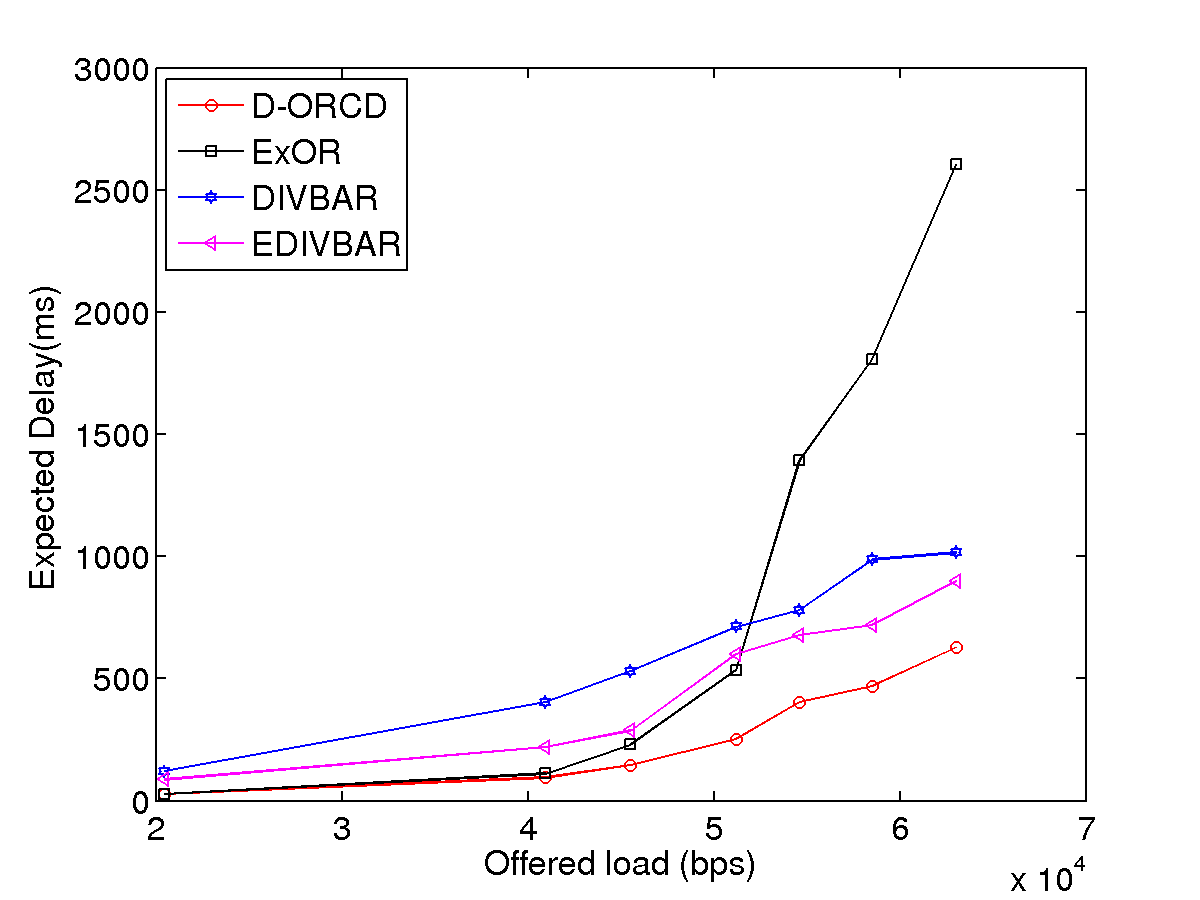}%
\label{fig:grid_block}
}
\subfigure [Fraction of the packets lost is dominated by FO packet loss. 
   (Packet loss due to buffer overflow is negligible)] { 
\includegraphics[width=.47\textwidth]{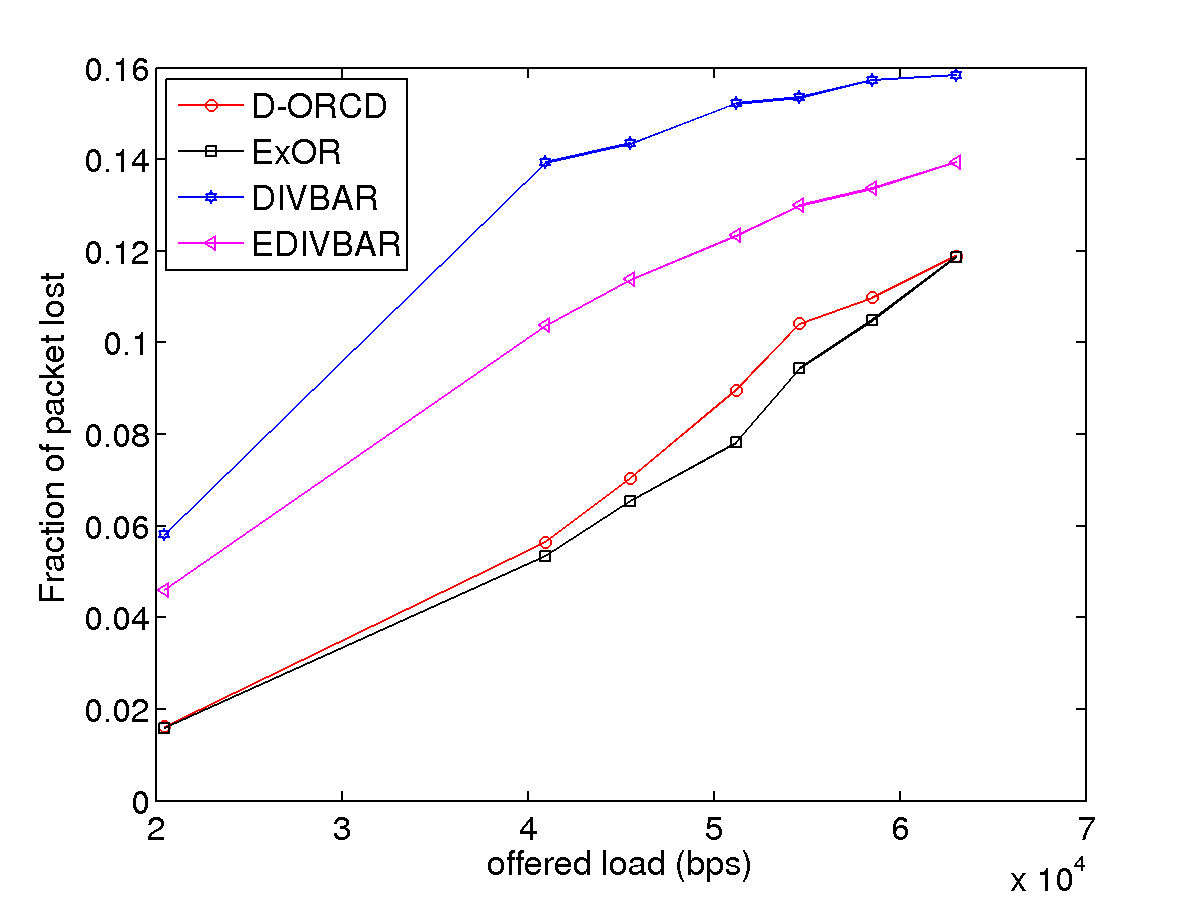}%
 \label{blockloss}
 }
 \end{figure}

\subsection{Choice of Parameters}
\label{choice}
Next, we investigate the performance of D-ORCD with respect to the design parameters in the grid topology of 16 nodes in Figure \ref{grid1}.  
It provides significant insight in the appropriate choice of the design parameters such as choice of partial diversity $M$ and choice of computation cycle $T$.  
\subsubsection{Choice of partial diversity $M$}
We focus on characterizing the trade-off between performance and overhead cost for D-ORCD. 
We consider modifications of D-ORCD with partial diversity to decide on the number of neighbors $M$ which acknowledge the reception of the packet. In particular, we compare the delay performance as well as the overhead cost of  D-ORCD. Figure \ref{fig:partialD} shows the average delivery time of each packet versus the number of $M$ for Network shown in Fig.\ref{grid1}. Figure \ref{fig:partialD} illustrates the trade-off between the delay performance and overhead cost D-ORCD. We note that limiting the size of the neighbor set to 4 provides the best trade-off. 

\begin{figure}[!h]
\centering
 {
\includegraphics[width=.5\textwidth]{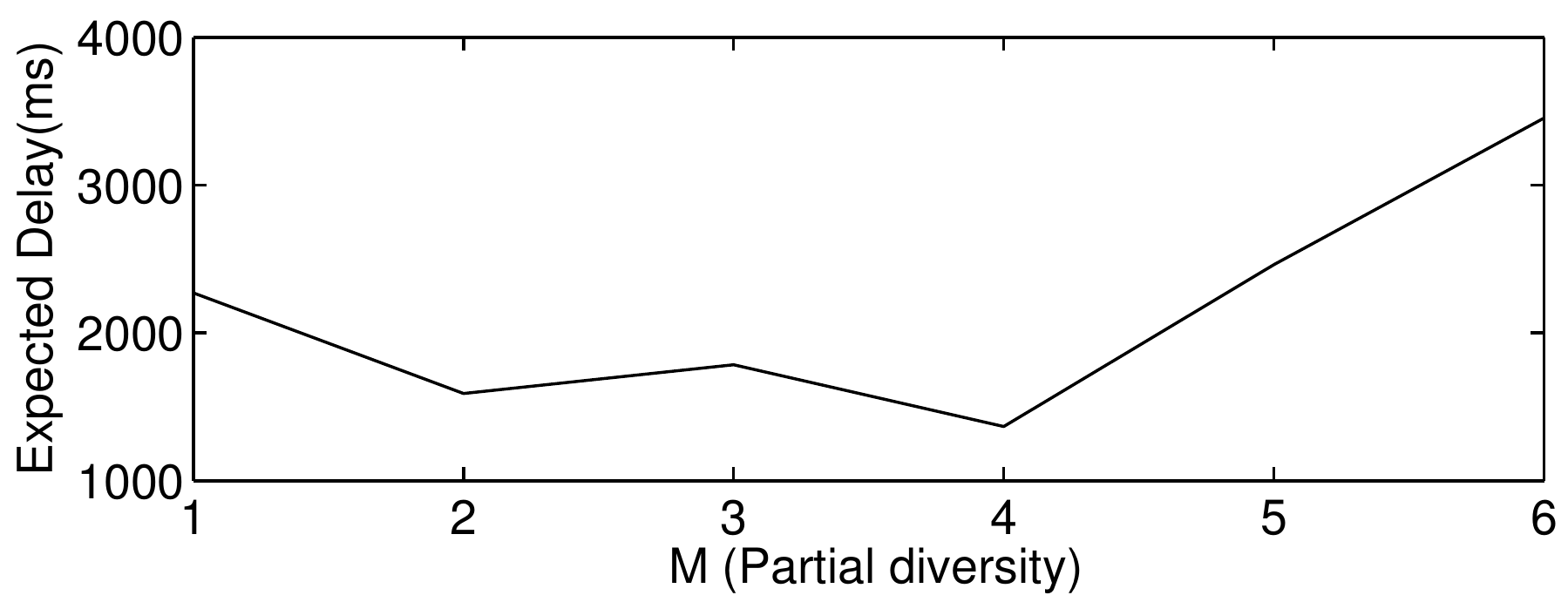}%
 }
\caption{Delay performance for D-ORCD  with partial diversity for Network shown in Fig.\ref{grid1}}
 \label{fig:partialD}
 \end{figure}

\subsubsection{Choice of computation cycle interval $T_c$}
\label{perfT}
D-ORCD throughput optimality as we will discuss in Section \ref{optimality} requires that computation cycle interval to be sufficiently large. However, 
to ensure a better delay performance, $T_c$ must be chosen sufficiently low to make the routing 
decisions more responsive to the instantaneous congestion. In particular, as $T_c$ increases, the chosen routing paths i) utilizes outdated queue lengths and ii) keeps the routing policy fixed for longer durations independent of current queue-lengths. In Figure \ref{fig:perfT}, we plot the performance 
of D-ORCD as $T_c$ varies in terms of multiple of $T_s$.  We observe that for high load, the choice $T_c=T_s$ outperforms other values for $T_c$. We have chosen a more responsive version of (\ref{fixpointDec_d}) at the cost of provable throughput optimality,  where $T_c$ is set to $T_s=0.5$ seconds.

\begin{figure}[ht]
\centering
\includegraphics[width=.45\textwidth]{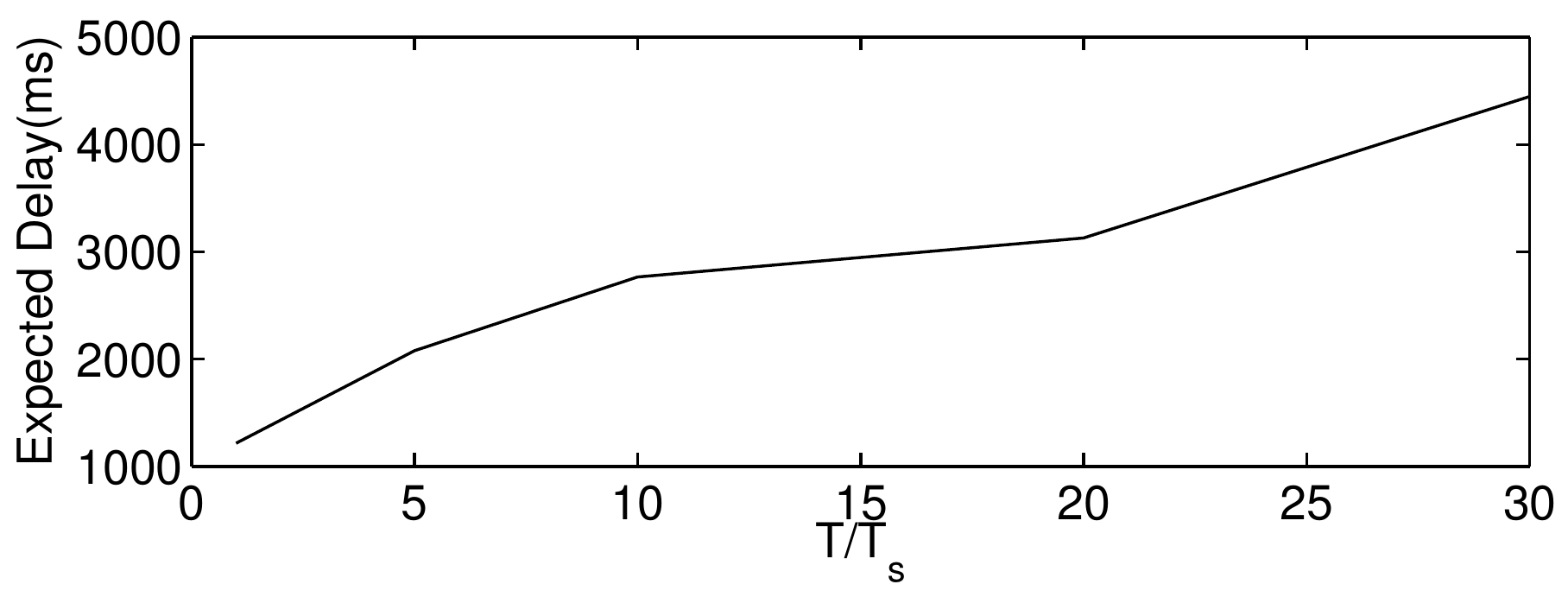}%
\caption{Delay Performance for D-ORCD for Network shown in Fig.\ref{grid1} as $T_c$ varies}
\label{fig:perfT}
\end{figure}

\section{Theoretical Guarantees}
\label{optimality}
 In this section, we provide a theoretical guarantee regarding the throughput optimality of  D-ORCD under the assumptions 
that i)the flows in the network are destined for the single destination node D (for multi-destination extensions see \cite{Hairo}), ii) link probabilities are time invariant iii) the routing decisions and the successful reception at set $S$ due to transmission from node $i$ is acknowledged perfectly to node $i$.

Before we precisely state the optimality, we define few notations.
We define a \emph{routing decision} $\mu_{ij}(t)$ to be the number of packets (upto 1 packet) whose relaying responsibility is shifted from node $i$ to node $j$ during time slot $t$ ($\mu_{ii}(t)=1$ means that $i$ retains the packet). 
Note that $\mu_{ij}(t)$ forms the departure process from node $i$, while it 
creates an endogenous arrival to node $j$. 
Without loss of optimality, we assume that $p_{ii}=1$ and $\mu_{iD}(t)=1$, if $D \in S_i(t)$.
\begin{definition}
A \emph{routing policy} is a collection of causal routing decisions $\cup_{i,j \in \Omega} \cup_{t=0}^{\infty} \{ \mu_{ij}(t) \}$.\end{definition}

Let $A_i(t)$ represent the amount of data that exogenously arrives to node $i$ during time slot $t$. Arrivals are assumed to be i.i.d. over time and bounded by a constant $A_{max}$. Let $\lambda_i = \mathbb{E} [A_i(t)]$ denote the exogenous arrival rate to node $i$. We define $\boldsymbol{\lambda}=[\lambda_1, \lambda_2, \ldots, \lambda_N]$ to be the arrival rate vector. 
Let $Q_i(t)$ denote the queue backlog of node $i$ at time slot $t$. 
We assume any data that is successfully delivered to the destination $D$ will exit the network and hence, $Q_D(t) = 0$ for all time slots $t$.
We define $\boldsymbol{Q}(t)=[Q_1(t), Q_2(t), \ldots, Q_{D-1}(t)]$ to be the vector of queue backlogs of nodes $1,2,\ldots,D-1$.

The selection of routing decisions under a routing policy $\Pi$ together with the exogenous arrivals impact the queue backlog of node $i$, $i \in \Omega$ as:
\begin{eqnarray}
\label{Qdynamic}
\nonumber
Q^{\Pi}_i(t+1) & =& [ Q^{\Pi}_i(t) - \sum_{j \in \Omega} \mu^{\Pi}_{ij} (t) ] ^{+}  \\
\nonumber
&& + \sum_{j \in \Omega} \mu^{\Pi}_{ji}(t) \mathbf{1}_{\{Q^{\Pi}_j(t) \ge \mu^{\Pi}_{ji}(t)\}} + A_i(t),
\end{eqnarray}
where the superscript $\Pi$ emphasizes the dependence of queue backlog dynamics on the
choice of policy $\Pi$. 

\begin{definition}
Given an ergodic exogenous arrival process with rate $\boldsymbol{\lambda}$, a routing policy $\Pi$ is said to \emph{stabilize} the network if $Q^{\Pi}_{tot} (t)$ is ergodic and $\mathbb{E} [Q^{\Pi}_{tot} (t)]$ remains bounded when packets are routed according to $\Pi$. 
The \emph{stability region} of the network (denoted by $\mathfrak{S}$) is the set of all arrival rate vectors $\boldsymbol{\lambda}$ for which there exists a routing policy that stabilizes the network. 
\end{definition}

\begin{definition}
A routing policy is said to be \emph{throughput optimal} if it stabilizes the network for all arrival rate vectors that belong to the interior of the stability region.
\end{definition}

\begin{fact}[Corollary 1 in \cite{Neely09}]
\label{Neelyeps}
An arrival rate vector $\boldsymbol{\lambda}$ is within the stability region $\mathfrak{S}$ if and only if there exists a stationary randomized routing policy that makes routing decisions $\{ \tilde{\mu}_{ij}(t) \}_{i,j \in \Omega}$, solely based on the collection of potential forwarders at time $t$, $\{ S_i(t) \}_{i \in \Omega}$, and for which 
$$\mathbb{E} \left[ \sum_{j \in \Omega} \tilde{\mu}_{kj}(t) - \sum_{i \in \Omega} \tilde{\mu}_{ik}(t) \right] \ge \lambda_k.$$
\end{fact}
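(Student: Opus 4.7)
The plan is to prove the two directions separately, following the standard template for network stability characterizations from Tassiulas--Ephremides, adapted by Neely to the opportunistic broadcast setting.

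For the necessity direction (``only if''), I would start with a stabilizing policy $\Pi$ for $\boldsymbol{\lambda}$ and extract a stationary randomized policy by time-averaging. Because the per-slot action space given $\{S_i(t)\}_{i \in \Omega}$ is finite and the configuration $\{S_i(t)\}_{i \in \Omega}$ itself takes values in a finite set, I would consider, for each configuration $\mathcal{S}$, the time-averaged conditional probabilities under $\Pi$ of each feasible routing decision given that $\{S_i(t)\} = \mathcal{S}$. Ergodicity of $\boldsymbol{Q}^{\Pi}(t)$ combined with bounded arrivals implies these Cesaro averages are tight, so a subsequential limit yields a well-defined stationary randomized rule $\tilde{\mu}$ depending only on $\mathcal{S}$. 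The rate inequality then follows by equating the time-averaged exogenous arrival rate at each non-destination node $k$ with its time-averaged net outflow, which is valid because $\mathbb{E}[Q^{\Pi}_k(t)]$ stays bounded, and then collecting terms conditional on each configuration.

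For the sufficiency direction (``if''), the natural approach is to exhibit an explicit stabilizing policy and use the hypothesized $\tilde{\mu}$ only as a comparison object. I would take the DIVBAR policy $\mu^*$, which at each $t$ maximizes $\sum_{i,j} \mu_{ij}(t)\,(Q_i(t) - Q_j(t))$ over admissible decisions given $\{S_i(t)\}$. Using $L(\boldsymbol{Q}) = \tfrac{1}{2}\sum_{i \ne D} Q_i^2$, a routine expansion of the queue dynamics~(\ref{Qdynamic}) yields a one-step drift bound
$$\mathbb{E}[L(\boldsymbol{Q}(t+1)) - L(\boldsymbol{Q}(t)) \mid \boldsymbol{Q}(t)] \le B - \sum_i Q_i(t)\,\Big(\mathbb{E}\big[\textstyle\sum_j \mu^*_{ij}(t) - \sum_j \mu^*_{ji}(t)\big] - \lambda_i\Big),$$
where $B$ absorbs second-moment terms bounded via $A_{\max}$ and the per-slot transmission limit. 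The max-weight property of $\mu^*$ lets me replace it on the right by $\tilde{\mu}$ at the cost of only worsening the bound; invoking the rate inequality with $\varepsilon$-slack (available because $\boldsymbol{\lambda}$ lies strictly in the interior of $\mathfrak{S}$) collapses the drift to at most $B - \varepsilon \sum_i Q_i(t)$, and Foster--Lyapunov then yields $\mathbb{E}[\sum_i Q_i] \le B/\varepsilon$.

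The hard part will be the necessity direction: making the time-averaged construction rigorous. I would need to verify that the subsequential limit of empirical conditional distributions of $\mu^{\Pi}_{ij}(t)$ given $\{S_i(t)\} = \mathcal{S}$ exists, is measurable in $\mathcal{S}$, and respects the physical feasibility constraints (at most one forwarder per transmission, relaying responsibility shifted only to nodes in $S_i(t)$, and $\mu_{iD} = 1$ whenever $D \in S_i(t)$). Bounded arrivals together with the finite action space make this tractable via standard tightness arguments, but the details deserve care. The sufficiency side, by contrast, is a fairly mechanical Lyapunov drift calculation once the right comparison between DIVBAR and $\tilde{\mu}$ has been set up.
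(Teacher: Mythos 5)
First, a point of orientation: the paper does not prove this statement at all --- it is imported as a known result, with a pointer to Corollary~1 of \cite{Neely09}, and is then used as a black box in the throughput-optimality argument for Theorem~\ref{SD-ORCDopt}. So your proposal is not paralleling anything in the paper; it is reconstructing the external result the paper relies on. On that reconstruction, your necessity (``only if'') direction is the standard Tassiulas--Ephremides/Neely route and is sound in outline: time-average the conditional frequencies of the routing decisions of a stabilizing policy $\Pi$ given the forwarder configuration $\{S_i(t)\}$, pass to a subsequential limit over the finite action/configuration space, and use flow conservation (valid because $\mathbb{E}[Q_k^{\Pi}(t)]$ stays bounded) to get $\mathbb{E}\big[\sum_j \tilde{\mu}_{kj}(t) - \sum_i \tilde{\mu}_{ik}(t)\big] \ge \lambda_k$; the technical points you flag (measurability, feasibility of the limiting kernel, independence of channel realizations from the policy's history) are exactly where the work lies.

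The genuine gap is in your sufficiency (``if'') direction, and it is a circularity, not a technicality. The hypothesis of that direction supplies only $\mathbb{E}\big[\sum_j \tilde{\mu}_{kj}(t) - \sum_i \tilde{\mu}_{ik}(t)\big] \ge \lambda_k$, with no slack, and the conclusion you must reach is precisely that $\boldsymbol{\lambda} \in \mathfrak{S}$. You therefore cannot invoke ``$\varepsilon$-slack available because $\boldsymbol{\lambda}$ lies strictly in the interior of $\mathfrak{S}$'': membership in $\mathfrak{S}$ (let alone interiority) is what is being proved, not a hypothesis. Without slack, your max-weight drift bound collapses to $\mathbb{E}[L(\boldsymbol{Q}(t+1)) - L(\boldsymbol{Q}(t)) \mid \boldsymbol{Q}(t)] \le B - \sum_i Q_i(t)\cdot 0 = B$, which yields no stability. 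Nor can the gap be patched, because with zero slack the claim as literally transcribed is false at the boundary: a single node transmitting to the destination with success probability $p<1$ and Bernoulli$(p)$ arrivals satisfies the flow inequality with equality under the obvious stationary policy, yet the queue is a zero-drift reflected random walk with positive variance, so no policy keeps $\mathbb{E}[Q]$ bounded and $\boldsymbol{\lambda} \notin \mathfrak{S}$ under the paper's Definition~2. The honest fix --- which is how the result is actually stated and proved in \cite{Neely09} --- is to put the slack where it belongs: either require the flow inequality with margin $\lambda_k + \varepsilon$, or restrict the stability conclusion to $\boldsymbol{\lambda}$ in the interior of the region characterized by the flow inequalities, so that the characterization can be applied to $\boldsymbol{\lambda} + \varepsilon \mathbf{1}$ and the resulting slack handed to the Lyapunov argument. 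With that restatement, your DIVBAR comparison step is fine; as written, it proves the interior version, not the if-and-only-if in the Fact.
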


We are ready to present Theorem~\ref{SD-ORCDopt} regarding the optimality of \mbox{D-ORCD}.   
\begin{theorem}
\label{SD-ORCDopt}
Suppose $T_c=O(D)$ and $M=D$. Then D-ORCD is throughput optimal.
\end{theorem}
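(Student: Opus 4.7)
The plan is to prove throughput optimality via a Lyapunov drift argument on the quadratic Lyapunov function $L(\boldsymbol{Q}) = \tfrac{1}{2}\sum_{i \neq D} Q_i^2$, but applied over a super-slot of length $T_c$ rather than a single slot. This is natural because D-ORCD freezes its routing policy for the entire duration of a computation cycle, using the averaged backlog $\bar{Q}_i^d(T(t))$ to compute congestion measures. The goal is to show that for any $\boldsymbol{\lambda}$ strictly interior to $\mathfrak{S}$, there exists $\varepsilon > 0$ and $B < \infty$ with
\begin{equation*}
\mathbb{E}[L(\boldsymbol{Q}(t+T_c)) - L(\boldsymbol{Q}(t)) \mid \boldsymbol{Q}(t)] \le B - \varepsilon \sum_{i} Q_i(t),
\end{equation*}
which yields ergodicity and bounded mean backlog via the Foster--Lyapunov criterion.

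First, I would exploit the remark in Section~\ref{D-ORCDdesign} and the hypothesis $T_c = O(D)$ to argue that within each cycle the distributed iterates $V_i(t)$ converge to the fixed point of the Bellman equation with link weights $\bar{Q}_i(T(t))$. Because the distributed computation is a generalized Bellman--Ford iteration with $D$ nodes, $O(D)$ rounds of the $T_s$-periodic exchange suffice for convergence (the number of control-packet exchanges per cycle is $T_c/T_s$ which, combined with $T_s \le T_c$ and $T_c = O(D)$, is enough provided the per-round exchange dominates the diameter). This lets me replace the distributed routing, for the purposes of drift analysis, by the \emph{centralized} ORCD of \cite{Naghshvar09} applied with frozen queue weights $\bar{Q}(T(t))$, which crucially is known to be throughput optimal in the centralized setting. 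The partial diversity assumption $M = D$ ensures that no relay set is artificially restricted, so the policy attains the full receiver-diversity minimum of the congestion measure over $S_i(t) \cup \{i\}$.

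Second, I would expand the squared-backlog drift over $T_c$ slots. Since arrivals and services are bounded by a constant each slot, the cross terms arising from drift decompose into the standard backpressure-style expression plus $O(T_c^2)$ deterministic slack. The heart of the argument is to show that choosing the receiver minimizing $\tilde V_k^{(i,d)}(T(t))$ (the shortest-draining-time relay under the frozen weights) is sufficient to \emph{negatively correlate} against any competing stationary randomized policy, in particular against the $\varepsilon$-slack policy guaranteed by Fact~\ref{Neelyeps}. Concretely, using the Bellman fixed-point property, I would show that the expected one-slot reduction $\sum_i Q_i \,\mathbb{E}[\sum_j \mu_{ij}^{\text{D-ORCD}} - \sum_j \mu_{ji}^{\text{D-ORCD}}]$ is at least as large as that obtained by the randomized benchmark policy (up to an $O(1)$ term coming from the distinction between $Q$ and $\bar Q$), which by Fact~\ref{Neelyeps} dominates $\varepsilon \sum_i Q_i$.

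The main obstacle is controlling the gap between the \emph{instantaneous} queue-length $Q_i(t)$ used in the drift and the \emph{averaged} queue-length $\bar Q_i(T(t))$ embedded in the congestion measures used for routing. Since arrivals and departures per slot are bounded by $A_{\max} + 1$ and a cycle has length $T_c = O(D)$, one has $|Q_i(t) - \bar Q_i(T(t))| = O(T_c) = O(D)$ uniformly, so the additional error in the drift inequality contributes only $O(D \sum_i Q_i)$ \emph{with a small constant in front that can be absorbed} — but only if the routing decision implied by $V(\cdot)$ remains ``correct'' under small perturbations of the weights. To handle this, I would invoke a stability-of-ordering argument: when $\sum_i Q_i$ is large, the ordering of nodes by congestion measure is robust to $O(D)$ perturbations of individual backlogs, so D-ORCD's choice coincides with that of the frozen-queue ORCD in the regime that matters for the drift lower bound. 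Combining these ingredients yields the required negative drift outside a bounded set, completing the proof.
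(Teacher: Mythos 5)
Your first step (using $T_c=O(D)$ to argue that the distributed iterates converge within a cycle to the Bellman fixed point with frozen weights, so that D-ORCD can be analyzed as centralized ORCD run on $\bar{\boldsymbol{Q}}(T(t))$) matches the paper's Lemma~\ref{converge}. The genuine gap is in your core drift step. You take $L(\boldsymbol{Q})=\tfrac12\sum_i Q_i^2$ and claim that routing to the receiver minimizing $\tilde{V}_k^{(i,d)}(T(t))$ makes the weighted reduction term $\sum_i Q_i\,\mathbb{E}\bigl[\sum_j \mu_{ij}-\sum_j\mu_{ji}\bigr]$ dominate that of the $\varepsilon$-slack randomized policy of Fact~\ref{Neelyeps}. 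That domination is the heart of the \emph{max-weight} proof, and it holds for DIVBAR precisely because backpressure pointwise maximizes $\sum_{i,j}\mu_{ij}(Q_i-Q_j)$ over all feasible relay choices; any other policy, including the randomized benchmark, is then dominated term by term. D-ORCD does not maximize this inner product: its congestion measure mixes local backlog, downstream draining time, and link success probabilities, so the min-$V$ receiver can easily have a \emph{larger} backlog than an alternative receiver (e.g., a lightly loaded node behind poor links). Hence the comparison against the randomized policy simply does not go through with the plain quadratic Lyapunov function, and no amount of bookkeeping on the $O(T_c)$ slack terms repairs it. This is exactly why the paper does not use $\sum_i Q_i^2$: it imports from \cite{Naghshvar09} the piecewise quadratic function $L_f^*$ of (\ref{Lyapunov}), defined cone-by-cone over path-connected rank orderings with weights $f(m,n)=1/\bigl(K^m(K^n-1)\bigr)$, constructed so that the gradient-like weights $U_f(k,\boldsymbol{Q})$ are ordered consistently with the congestion-measure ranking (Fact~\ref{Croute}, Claim~\ref{Droute}); only for \emph{that} function is the min-$V$ decision the drift-minimizing one, yielding Fact~\ref{LyapStable} and then Lemma~\ref{BDiff}.

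Your fallback, the ``stability-of-ordering'' argument, is also unsound as stated: two nodes' congestion measures can be arbitrarily close no matter how large $\sum_i Q_i$ is, so an $O(D)$ perturbation of backlogs can always flip the ranking, and D-ORCD's choice need not coincide with the frozen-queue centralized choice in the large-queue regime. The paper avoids needing any ordering robustness. Instead, in the proof of Lemma~\ref{BDiff} it connects $\boldsymbol{Q}(t)$ and $\hat{\boldsymbol{Q}}(t)=\bar{\boldsymbol{Q}}(T(t))$ by a line segment crossing finitely many cones and uses the Lipschitz behavior of $U_f$ across the separating hyperplanes to show
\begin{equation*}
U_f(b,\boldsymbol{Q}(t)) - U_f(a,\boldsymbol{Q}(t)) = O\bigl(\| \boldsymbol{Q}(t) - \hat{\boldsymbol{Q}}(t) \|\bigr) = O(1),
\end{equation*}
i.e., even when the two policies pick \emph{different} relays ($a$ versus $b$), the resulting one-slot drifts differ only by a bounded amount, which is then absorbed into the constant of the negative-drift inequality. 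If you want to complete your proof, you must either adopt this piecewise Lyapunov machinery (or construct an equivalent function whose gradient ordering matches the draining-time ranking) rather than the plain quadratic, and replace ordering robustness with a bounded-drift-difference argument of this type.
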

The proof of Theorem~\ref{SD-ORCDopt} is based on the Foster-Lyapunov Theorem. 
For completeness, the structure of the Lyapunov function and a sketch of the proof is provided in the Appendix.
By Theorem~\ref{SD-ORCDopt}, under D-ORCD, the average total queue backlog remains bounded. Little's theorem implies that under D-ORCD, expected delay is bounded.


\begin{remark}
Assumptions for optimality of D-ORCD could be relaxed in many cases. 
\begin{enumerate}
\item 
The packet transmission on a link $(i, j)$ is assumed to be successful with probability $p_{ij}$, and transmissions on links were assumed to be independent of each other. The computations in (\ref{fixpointDec_d}) and (\ref{infrequent}) can be generalized to incorporate correlated link qualities. 
by replacing the term {\small{
$\big( \prod_{k \in S} p_{ik} \big) \big( \prod_{l \notin S} (1-p_{il}) \big)$}} with $P(S|i)$ in the definition of $P^{(i,d)}(t)$ and $P^{(i,d)}_{succ-k}(t)$, where $P(S|i)$ denotes be the probability of the event $\{S_i(t) = S\}$.
Furthermore, it is straight forward to show that the throughput optimality of D-ORCD is robust to all channel estimation errors, even though, erroneous link models, in general, can significantly degrade its delay performance.
\item 
In this paper, we assumed that the network topology and the probability of successful transmissions are time-invariant.
The generalization to the case of time-varying network topology with stationary transmission probabilities is straight forward \cite{Mohammad10}.
\end{enumerate}
\end{remark}

\section{Conclusions and Discussions}
\label{Conclusion}

In this paper, combining the important aspects of shortest path routing with those of backpressure routing, we provided a distributed  opportunistic
routing policy with congestion diversity (D-ORCD) is proposed under which packets are routed according to a rank ordering of the nodes based on a congestion cost measure. 
  Furthermore, we show that D-ORCD allows for a practical distributed and asynchronous 802.11 compatible implementation, whose performance was investigated via a detailed set of QualNet simulations under practical and realistic networks. Simulations show that D-ORCD consistently outperforms existing routing algorithms in practical settings.

In D-ORCD, we do not model the interference from the nodes in the network, but instead leave that issue to a classical MAC operation. 
However, the generalization to the networks with inter-channel interference follows directly from \cite{Neely09}. The price of this generalization is shown to be the centralization of the routing/scheduling globally across the network or a constant factor performance loss of the distributed variants \cite{Neely09,Xi06,Ying08}.
In future, we are interested in generalising D-ORCD for joint routing and scheduling optimizations as well consider system level implications. 
Incorporating throughput optimal CSMA based MAC scheduler (proposed in \cite{Walrand}) with congestion aware routing is also promising area of research.

The design of D-ORCD requires knowledge of channel statistics. Designing congestion control routing algorithms to minimize expected delay without the topology and the channel statistics knowledge is an area of future research. 

 \section*{Acknowledgements}
 This work was partially supported by the industrial sponsors of UCSD Center for Wireless Communications (CWC) and Center for Networked Systems (CNS) and UC Discovery Grant \#com07-10241. 
 The authors would like to thank Mr. Anders
 Plymoth for valuable discussions and suggestions.
 
 \appendix
We provide a sketch of the proof for the throughput optimality of D-ORCD for a connected network.\footnote{In connected network each node has a positive probability path to the destination. If a node has no path to the destination, it cannot sustain any traffic and can be ignored without loss of generality.} 

\subsection{Relationship to Centralized ORCD}
We prove the throughput optimality by relating 
 D-ORCD update equation (\ref{fixpointDec_d}) to the convergence of closely related fixed point equation. 
  In particular, we relate the routing decisions for D-ORCD with the decisions taken according to the congestion measures $\{V_i^*(t)\}$ obtained from the fixed point equation:  
\begin{equation}
\label{infrequent}
V_i^*(t) = \frac{Q_i(t)}{P^{(i,D)}(t)} + \sum_{k : H^{(i,D)}(t)}  \frac{P^{(i,D)}_{succ-k}(t)}{P^{(i,D)}(t)} V_k^*(t). 
\end{equation}	

We refer to the centralized routing algorithm which makes decisions at each instant according to $V_i^*(t)$ as {{\mbox{C-ORCD}}}.  
Following lemma states a relationship between D-ORCD and C-ORCD. 
\begin{lemma}
\label{converge}
Assume $T_c$ is sufficiently large $(T_c \sim O(D))$. Then during  $ T(t) \leq t <  T(t+T_c) $, (\ref{fixpointDec_d}) converges
 to the fixed point equation (\ref{infrequent}), 
i.e. $\{{V}_i^D(T(t+T_c)\}_{i \in \Omega}$ solves (\ref{infrequent}) and  ${V}_i^D(T(t+T_c)) = V_i^*(T(t))$.
\end{lemma}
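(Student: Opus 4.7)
The approach is to exploit the fact that within a single computation cycle $[T(t),T(t)+T_c)$ the queue averages $\bar{Q}_i^{d}(T(t))$ appearing in (\ref{fixpointDec_d}) are \emph{frozen}: they were computed from the preceding cycle and are not refreshed until the next virtual-table rollover. Therefore the right-hand side of (\ref{fixpointDec_d}) defines a time-invariant operator on the vector $(V_i^D)_{i\in\Omega}$, and one only needs to check two things: (i) that the unique fixed point of this operator coincides with the $V_i^*$ produced by (\ref{infrequent}), and (ii) that the distributed asynchronous iteration driven by control packets every $T_s$ seconds reaches this fixed point within $O(D)$ rounds.

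First I would reorganize (\ref{fixpointDec_d}) for $d=D$ by collecting the $1/P^{(i,D)}(t)$ term from $L_i^D$ with the down-the-stream sum, and identifying $\bar{Q}_i^D(T(t))$ with $Q_i(t)$ in the frozen-backlog regime. After multiplying through by $P^{(i,D)}$ and noting that $\sum_{k\in H^{(i,D)}}P^{(i,D)}_{succ-k}=P^{(i,D)}$, the equation $V_i^D=L_i^D+D_i^D$ becomes exactly (\ref{infrequent}). Thus the fixed points of the two recursions coincide, and it suffices to prove convergence.

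Next, I would interpret the operator defined by (\ref{fixpointDec_d}) as the Bellman operator of a stochastic shortest-path problem on $\Omega$ in which the per-stage cost at node $i$ is the local draining term and the transition probabilities to the next relay are the $P^{(i,D)}_{succ-k}$ induced by the current priority ordering. Because the network is connected, an optimal policy routes any packet to $D$ in at most $D-1$ priority-decreasing hops; combined with the monotonicity of the min-type Bellman operator, this yields the standard Bellman--Ford termination bound: after $D$ synchronous sweeps the iterates agree with the fixed point exactly. Each control-packet exchange (spaced by $T_s$) constitutes at least one such sweep at every node, so choosing $T_c = c\,D$ for a sufficiently large constant $c$ guarantees convergence by the end of the cycle, giving $V_i^D(T(t)+T_c)=V_i^*(T(t))$ for all $i$.

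The main obstacle is that the priority sets $H^{(i,D)}(t)$ and the induced probabilities $P^{(i,D)}_{succ-k}$ depend on the iterates $V_k$ themselves, so the update map is only piecewise linear and the standard linear-contraction argument does not apply directly. The key technical step is therefore to establish monotonicity of the Bellman operator and to argue that after $D$ rounds the $V$-values induce a stable ranking consistent with the true optimal priority order at the frozen backlogs; once this stabilization is shown, the Bellman--Ford induction on hop-count to $D$ proceeds routinely. Care must also be taken with the asynchronous update model (nodes trigger updates on receipt of any control packet rather than in lockstep), which I would handle by appealing to the asynchronous-convergence framework of Bertsekas--Tsitsiklis for monotone iterations, whereby monotonicity plus a finite number of global communication rounds suffices to reach the fixed point.
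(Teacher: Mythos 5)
Your proposal is correct and takes essentially the same route as the paper: both freeze the queue backlogs over a computation cycle so that (\ref{fixpointDec_d}) becomes a Bellman--Ford-type iteration with fixed link costs, identify its fixed point with (\ref{infrequent}), and then invoke finite-time ($O(D)$-round) convergence of asynchronous distributed Bellman--Ford, which the paper simply outsources to \cite[Theorem 2.4]{Tsitsiklis95}. The extra detail you provide---the monotone, priority-level induction handling the dependence of $H^{(i,D)}(t)$ and $P^{(i,D)}_{succ-k}(t)$ on the iterates themselves---is precisely the content hidden inside that citation, not a different argument.
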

 \begin{proof}
 The convergence ${V}_i^D(t) \to V_i^*(T(t))$ during $T(t) \leq t < T(t+T_c) $ follows
by relating (\ref{fixpointDec_d}) to  the  Bellman-Ford algorithm with fixed link cost. It is known   
from \cite[Theorem 2.4]{Tsitsiklis95} that asynchronous distributed Bellman-Ford algorithm converges in finite time when the control packets are instantaneously received (or control packets are timestamped and older packets are discarded). Furthermore, with high probability, the time until the termination of the asynchronous Bellman-Ford algorithm is	$\mathbf{O}(D)$, where $D$ is the number of nodes in the network.
\end{proof}

Note that when implementing D-ORCD, we broadcast control packets using high priority and the control packets do not undergo backoff. This ensures that with high 
probability the packets are instantaneously received. Thus the convergence in  Lemma \ref{converge} is justified.   


We now provide the proof for the throughput optimality of D-ORCD for $T_c\sim O(D)$. 
To simplify the notations, let policies in C-ORCD and D-ORCD be denoted by $\pi^*$ and $\hat{\pi}$ respectively.
In \cite{Naghshvar09}, the authors constructed an appropriate Lyapunov function $L^*$ to show that  C-ORCD is throughput optimal. In particular it is shown that:  
\begin{fact}
\label{LyapStable}
There exists a  Lyapunov function. $L^*: {\mathbb{R}}^{D}_{+} \to \mathbb{R}_+$  such that for all time slots $t$ and $B>0$, $\epsilon>0$, 
\begin{align}
\label{LyapStableeqn}
\mathbb{E} \left[ L^*_{}(\boldsymbol{Q}^{\pi^*} (t+1)) - L^*_{}(\boldsymbol{Q}(t)) | \boldsymbol{Q}(t) \right] \le B - \epsilon \sum_{k=1}^{N} Q_k(t),
\end{align}
where superscript $\pi^*$ in $\boldsymbol{Q}^{\pi^*}$ implies the dependence of backlog vector on the routing policy $\pi^*$.  
\end{fact}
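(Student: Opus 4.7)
The plan is to apply the multi-step Foster-Lyapunov criterion using the very Lyapunov function $L^*$ already constructed for C-ORCD in Fact~\ref{LyapStable}. Since $M=D$ removes the partial-diversity restriction, the only substantive difference between D-ORCD (denoted $\hat\pi$) and C-ORCD (denoted $\pi^*$) is that during each computation cycle $\hat\pi$ ranks neighbors by the \emph{stale} values $V_i^*(T(t))$ instead of the instantaneous $V_i^*(t)$. I would therefore analyze the drift of $L^*$ over a full cycle of length $T_c$, anchored at the renewal instants $t=nT_c$, rather than over single slots as in Fact~\ref{LyapStable}.

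The anchoring comes from Lemma~\ref{converge}: the actual routing table at time $nT_c$ stores exactly $V_i^*(nT_c)$, so the $\hat\pi$-decision in slot $nT_c$ agrees with the $\pi^*$-decision in state $\boldsymbol{Q}^{\hat\pi}(nT_c)$. For intermediate slots $nT_c+s$ with $1\le s<T_c$, D-ORCD continues to rank using $V_i^*(nT_c)$ while C-ORCD would use $V_i^*(nT_c+s)$. I would telescope
\[
\mathbb{E}\bigl[L^*(\boldsymbol{Q}^{\hat\pi}((n{+}1)T_c))-L^*(\boldsymbol{Q}^{\hat\pi}(nT_c))\,\big|\,\boldsymbol{Q}^{\hat\pi}(nT_c)\bigr]=\sum_{s=0}^{T_c-1}\mathbb{E}\bigl[L^*(\boldsymbol{Q}^{\hat\pi}(nT_c{+}s{+}1))-L^*(\boldsymbol{Q}^{\hat\pi}(nT_c{+}s))\,\big|\,\boldsymbol{Q}^{\hat\pi}(nT_c)\bigr]
\]
and compare each summand against the one-slot C-ORCD drift supplied by Fact~\ref{LyapStable}.

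The main obstacle is bounding the per-slot Lyapunov cost of acting on stale rankings. The lever is that exogenous arrivals are bounded by $A_{\max}$ and at most one packet departs from each queue per slot, so the excursion satisfies $\|\boldsymbol{Q}^{\hat\pi}(nT_c+s)-\boldsymbol{Q}^{\hat\pi}(nT_c)\|_\infty\le sA_{\max}\le T_c A_{\max}$, which is $O(D)$ because $T_c=O(D)$. A continuity argument on the fixed point equation~(\ref{infrequent}) then shows that the ranking induced by $V_i^*(nT_c)$ agrees with the ranking induced by $V_i^*(nT_c+s)$ except in a set of queue states whose Lyapunov contribution is bounded; combined with the standard quadratic-like growth of $L^*$ in this literature, this yields a per-slot slack bounded by a state-independent constant. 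Summing the $T_c$ per-slot inequalities and absorbing the $O(T_c^2)=O(D^2)$ residual gives the multi-step drift inequality
\[
\mathbb{E}\bigl[L^*(\boldsymbol{Q}^{\hat\pi}((n{+}1)T_c))-L^*(\boldsymbol{Q}^{\hat\pi}(nT_c))\,\big|\,\boldsymbol{Q}^{\hat\pi}(nT_c)\bigr]\le B'T_c-\epsilon T_c\sum_{k=1}^{D-1}Q^{\hat\pi}_k(nT_c)
\]
for suitable $B',\epsilon>0$ whenever $\boldsymbol{\lambda}$ lies in the interior of $\mathfrak{S}$.

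The final step applies the multi-step Foster-Lyapunov theorem to the sampled chain $\{\boldsymbol{Q}^{\hat\pi}(nT_c)\}_n$, yielding ergodicity and $\sup_n\mathbb{E}\bigl[\sum_k Q^{\hat\pi}_k(nT_c)\bigr]<\infty$. The bounded intra-cycle excursion then transfers the bound to all times $t$, and Little's theorem converts bounded expected backlog into bounded expected delay, completing the proof of throughput optimality. The delicate part of the argument—and the step I expect to require the most care—is the continuity/perturbation estimate relating the stale ranking to the instantaneous one, since the rankings themselves enter discontinuously into the drift through the selection of the highest-priority receiver in $S_i(t)$.
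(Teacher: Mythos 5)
There is a genuine gap, and it is structural: your proposal does not prove the statement in question --- it assumes it. Fact~\ref{LyapStable} is precisely the assertion that a Lyapunov function $L^*$ \emph{exists} whose one-slot drift under the centralized policy $\pi^*$ (C-ORCD) satisfies the bound $B-\epsilon\sum_k Q_k(t)$. Your very first sentence invokes ``the very Lyapunov function $L^*$ already constructed for C-ORCD in Fact~\ref{LyapStable},'' and later you ``compare each summand against the one-slot C-ORCD drift supplied by Fact~\ref{LyapStable}.'' As an argument for Fact~\ref{LyapStable} itself this is circular. What you have actually sketched --- telescoping the drift of $\hat\pi$ over a computation cycle anchored at $t=nT_c$, using Lemma~\ref{converge} to align the stale ranking with the instantaneous one, bounding the intra-cycle excursion by $T_cA_{\max}$, and handling the perturbation of rankings --- is the paper's \emph{downstream} argument, namely Lemma~\ref{BDiff} and its use in Theorem~\ref{SD-ORCDopt}, not the statement you were asked to prove.

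To prove Fact~\ref{LyapStable} one must actually exhibit $L^*$ and establish its negative drift under $\pi^*$, which is what the paper (following \cite{Naghshvar09}) does: it constructs the piecewise quadratic function $L^*_f$ in (\ref{Lyapunov}), defined cone-by-cone over the partition of $\mathbb{R}^{D}_+$ into cones $D^c_f(R)$ indexed by path-connected rank orderings $R=(C_1,\ldots,C_M)$, with weights given by the bivariate function $f(m,n)=1/\bigl(K^m(K^n-1)\bigr)$; it then expands the drift via a first-order Taylor argument and the queue dynamics (Fact~\ref{queuedynamic_ineq}) to reach (\ref{equality}), expressed through the per-node potential $U_f$ of (\ref{DefineV}); the key structural input is Fact~\ref{Croute}, that C-ORCD forwards each packet to the potential forwarder minimizing $U_f(\cdot,\boldsymbol{Q}(t))$, which makes the transmission terms in (\ref{equality}) as negative as possible; finally, the $-\epsilon\sum_k Q_k(t)$ term comes from comparing against the stationary randomized policy guaranteed by Fact~\ref{Neelyeps} for any $\boldsymbol{\lambda}$ in the interior of $\mathfrak{S}$. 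None of these ingredients --- the cone construction, the weight function $f$, the minimization property of $\pi^*$, or the comparison with a randomized stabilizing policy --- appears in your proposal, so the existence claim and the drift inequality for $\pi^*$ remain unestablished. (Your cycle-based comparison would be a reasonable alternative route for Lemma~\ref{BDiff}, but only once Fact~\ref{LyapStable} is in hand.)
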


To prove throughput optimality of D-ORCD, it suffices to show that 
the Lyapunov drifts under $\pi^*$ and $\hat{\pi}$ have a bounded difference. More precisely, 
\begin{lemma}
\label{BDiff}
Let $L^*$ be the Lyapunov function as proposed in \cite{Naghshvar09}.
Then for $B''>0$, 
\begin{align*}
\mathbb{E} \left[ L^{*}(\boldsymbol{Q}^{\hat{\pi}}(t+1)) | \boldsymbol{Q}(t) \right] 
- \mathbb{E} \left[ L^{*}(\boldsymbol{Q}^{{\pi^*}}(t+1)) | \boldsymbol{Q}(t) \right] < B''.
\end{align*}
\end{lemma}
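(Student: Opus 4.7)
The plan is to leverage Lemma~\ref{converge} as a coupling device between $\hat{\pi}$ (D-ORCD) and $\pi^{*}$ (C-ORCD). Indeed, by Lemma~\ref{converge}, $\tilde{V}^{(i,D)}_k(T(t)) = V_k^{*}(T(t))$, so at any time $t$ within the cycle $[T(t), T(t)+T_c)$, D-ORCD picks a next hop minimizing the fixed-point congestion measure evaluated at the backlogs frozen at the cycle boundary $\boldsymbol{Q}(T(t))$, whereas C-ORCD minimizes the same functional evaluated at the current backlogs $\boldsymbol{Q}(t)$. The entire discrepancy between the two policies in any given slot therefore reduces to the temporal drift of $V^{*}$ across the interval $[T(t), t]$, whose length is at most $T_c = O(D)$.

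My first step would be to quantify this staleness. Because at most $A_{max}$ packets arrive and at most one packet is forwarded per node per slot, one has the uniform pathwise bound
\begin{equation*}
|Q_i(s) - Q_i(T(t))| \le (A_{max}+1)\,T_c \quad \text{for all } s \in [T(t), T(t)+T_c),\ i \in \Omega.
\end{equation*}
Since $V^{*}$ is obtained by solving the linear fixed-point system~(\ref{infrequent}) whose coefficients $P^{(i,D)}_{succ-k}(t)/P^{(i,D)}(t)$ depend only on the link probabilities and the induced priority ordering, this yields a constant $C_1 = C_1(A_{max}, D)$ with $|V_i^{*}(t) - V_i^{*}(T(t))| \le C_1$ for every $i \in \Omega$. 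Coupling the two processes from the common state $\boldsymbol{Q}(t)$, since each policy alters any individual queue by at most $A_{max}+1$ in a single slot, we also have the deterministic per-slot bound $|Q_i^{\hat{\pi}}(t+1) - Q_i^{\pi^{*}}(t+1)| \le 2(A_{max}+1)$.

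The main obstacle will be translating these pointwise $O(1)$ queue discrepancies into a \emph{uniform} $O(1)$ bound on the expected Lyapunov drift difference. Because $L^{*}$ of \cite{Naghshvar09} is of quadratic character in the backlogs, a direct mean-value expansion yields only the unsatisfactory bound $O(\|\boldsymbol{Q}(t)\|)$, which would prevent the Foster-Lyapunov conclusion once combined with Fact~\ref{LyapStable}. I would overcome this by exploiting the greedy minimization structure shared by both policies: whenever $\hat{\pi}$ selects a relay different from that of $\pi^{*}$, the optimality of $\hat{\pi}$ with respect to $V^{*}(T(t))$ together with $|V_i^{*}(t) - V_i^{*}(T(t))| \le C_1$ ensures that its local objective is suboptimal (with respect to the C-ORCD functional) by at most an additive constant times $C_1$. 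Thus the cross terms of the form $Q_i(t)\bigl(\Delta_i^{\hat{\pi}} - \Delta_i^{\pi^{*}}\bigr)$ appearing in the drift expansion of $L^{*}$ can be bounded by $C_1 \cdot V_i^{*}(t)$ rather than by $Q_i(t)$ itself, and since $V^{*}$ is the very quantity that $L^{*}$ is built from, these contributions collapse into the same telescoping terms already controlled in the proof of Fact~\ref{LyapStable}. A careful bookkeeping of these absorbed terms, together with the bounded second-order fluctuations of $L^{*}$, then yields a uniform constant $B''$ independent of $\boldsymbol{Q}(t)$, completing the proof.
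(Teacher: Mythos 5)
Your skeleton matches the paper's strategy --- couple $\hat{\pi}$ and $\pi^*$ via Lemma~\ref{converge}, note that the staleness $\| \boldsymbol{Q}(t) - \hat{\boldsymbol{Q}}(t)\|$ is uniformly bounded because per-slot queue changes are bounded and $T_c = O(D)$, and recognize that a naive expansion of the quadratic $L^*$ only gives $O(\|\boldsymbol{Q}(t)\|)$, so the two-sided greedy optimality must be exploited. However, the execution of the crucial step has a genuine gap: you carry out the regret argument in the wrong ``currency.'' You bound the suboptimality of D-ORCD's relay choice in terms of the congestion measure $V^*$ (getting an additive $O(C_1)$ regret in $V^*$ units) and then claim the cross terms in the drift can be bounded by $C_1 \cdot V_i^*(t)$, absorbing them into ``the same telescoping terms already controlled in the proof of Fact~\ref{LyapStable}.'' This fails on two counts. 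First, $V_i^*(t)$ is a draining time that scales linearly with the backlogs, so $C_1 \cdot V_i^*(t)$ is \emph{not} $O(1)$; the lemma demands a constant $B''$ independent of $\boldsymbol{Q}(t)$, and absorbing an $O(\|\boldsymbol{Q}(t)\|)$ error into the negative drift $-\epsilon \sum_k Q_k(t)$ of Fact~\ref{LyapStable} is illegitimate, since $\epsilon$ can be arbitrarily small (it depends on how close $\boldsymbol{\lambda}$ is to the boundary of $\mathfrak{S}$) while $C_1$ is a fixed constant. Second, the premise that ``$V^*$ is the very quantity that $L^*$ is built from'' is false: $L^*_f$ in (\ref{Lyapunov}) is a piecewise quadratic in the \emph{backlogs} over the cones $D^c_f(R)$, and a bounded gap in $V^*$ between two candidate relays does not imply a bounded gap in the ranking function $U_f$ of (\ref{DefineV}) --- two nodes with congestion measures within $O(1)$ of each other can sit in different ranking classes $C_i, C_j$ whose $U_f$ values differ by class queue sums, which grow with $\|\boldsymbol{Q}(t)\|$.

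The paper's proof avoids $V^*$ comparisons entirely. The bridge it uses is Claim~\ref{Droute} (the D-ORCD analogue of Fact~\ref{Croute} from \cite{Naghshvar09}): D-ORCD's routing decision minimizes $U_f(\cdot,\hat{\boldsymbol{Q}}(t))$ evaluated at the stale averaged backlog $\hat{\boldsymbol{Q}}(t)=\boldsymbol{\bar{Q}}(T(t))$, while C-ORCD minimizes $U_f(\cdot,\boldsymbol{Q}(t))$. With $a$ the C-ORCD choice and $b$ the D-ORCD choice, this gives the two-sided sandwich (\ref{InfreqClaim_01})--(\ref{InfreqClaim_02}) directly in $U_f$. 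The remaining work is to show $U_f(k,\cdot)$ is Lipschitz along the segment joining $\boldsymbol{Q}(t)$ and $\hat{\boldsymbol{Q}}(t)$, which requires the telescoping argument across cone boundaries (the points $Z_1,\ldots,Z_M$ on the separating hyperplanes, using that $U_f(k,\cdot)$ is linear inside each cone and that consecutive points share a cone). Combining the sandwich with this Lipschitz property yields
\begin{align*}
0 \le U_f(b,\boldsymbol{Q}(t)) - U_f(a,\boldsymbol{Q}(t)) \le (\eta_b - \eta_a)\, \| \boldsymbol{Q}(t) - \hat{\boldsymbol{Q}}(t) \| = O(1),
\end{align*}
which, plugged into the drift-difference identity (\ref{PrfThm2_01}), gives the constant $B''$. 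To repair your proof you would need to (i) establish Claim~\ref{Droute}, i.e.\ that the $V^*$-ranking at stale backlogs coincides with the $U_f$-ranking at stale backlogs (this is the nontrivial content inherited from the cone construction of \cite{Naghshvar09}), and (ii) replace your $V^*$-regret bound with the $U_f$-Lipschitz telescoping across cones; neither step is present in your proposal.
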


Lemma \ref{BDiff} together with (\ref{LyapStableeqn}) implies the existence of Lyapunov function with negative 
expected drift i.e. for  Lyapunov function $L^*$, there exists $B'>0$, $\epsilon'>0$ such that, 
\begin{align}
\nonumber
\mathbb{E} \left[ L^*_{}(\boldsymbol{Q}^{\hat{\pi}}(t+1)) - L^*_{}(\boldsymbol{Q}^{\hat{\pi}}(t)) | \boldsymbol{Q}(t) \right] \le B' - \epsilon' \sum_{k=1}^{N} Q_k(t). 
\end{align}

The details of the construction of $L^*$ and the proof of Lemma~\ref{BDiff} is provided in Appendix~\ref{app:drift}.

\subsection{Review of C-ORCD results}

Before proceeding, we introduce some notations. Let $[ x ] ^{+} = \max \{x,0 \}$. The indicator function $\mathbf{1}_{\{ X \}}$ takes the value $1$ whenever event $X$ occurs, and $0$ otherwise. For any set $S$, $\left| S \right|$ denotes the cardinality of $S$, while for any vector $\boldsymbol{v}$, $\left\| \boldsymbol{v} \right\|$ denotes the euclidean norm of $\boldsymbol{v}$.
When dealing with a sequence of sets $C_1,C_2,\dots$, we define $C^i=\cup_{j=1}^{i} C_j$. 


The following definitions are required in order to identify the Lyapunov functions for C-ORCD and D-ORCD.

\begin{definition}
A \emph{rank ordering} $R=(C_1,C_2,\ldots,C_M)$ is an ordered list of non-empty sets $C_1,C_2,\ldots,C_M$ $(1\le M \le D)$, referred to as \emph{ranking classes}, that make up a partition of the set of nodes $\{ 1, 2, \ldots, D \}$, i.e., $\cup_{i=1}^{M} C_i = \{ 1,2,\ldots,D \}$ and $C_i \cap C_j = \emptyset$, $i \neq j$. 
\end{definition}

\begin{definition}
A rank ordering $R=(C_1,C_2,\ldots,C_M)$ is referred to as \emph{path-connected} if for each node $i \in C_k$, $1 \le k \le M$, there exist distinct nodes  $j_1,j_2,\ldots,j_l \in C^{k-1}$ such that $p_{ij_1}>0, p_{j_1 j_2}>0, \ldots, p_{j_l D}>0$. 
The set of all path-connected rank orderings is denoted by $\mathcal{R}^{c}$.
\end{definition}

Let $f$ be a bivariate function of the following form:
\begin{align*}
f(m,n)=\frac{1}{K^m (K^n -1)} \ \ \text{for all} \ m\ge0,n>0,
\end{align*}
where $K = 1+\frac{1}{p_{\text{min}}}$ for $p_{\text{min}}=\min \{ p_{ij}: i , j \in \Omega, p_{ij}>0 \}$.

In \cite{Naghshvar09}, the authors proposed a method that utilizes the bivariate function $f$ and partitions the space of queue backlogs, $\mathbb{R}^D_+$, 
into $|\mathcal{R}^c|$ cones denoted by $\{ D^c_f(R) \}_{R \in \mathcal{R}_c}$.
The piece-wise Lyapunov function, $L^{*}_{f}: \mathbb{R}^{D}_{+} \to \mathbb{R}_{+}$, is then constructed by assigning to each cone 
$D^c_f(R)$, $R=(C_1,C_2,\ldots,C_M) \in \mathcal{R}_c$, a weighted quadratic function of the form:
\vspace{-0.2cm}
\begin{align}
L_{f}(\boldsymbol{Q},R)= \sum_{i=1}^M f(|C^{i-1}|,|C_{i}|) Q^2_{C_i},
\end{align}
where $Q_{C}(t)= \sum \limits_{i \in C} Q_i(t)$. 
\vspace{-0.2cm}
More precisely,
\begin{align}
\label{Lyapunov}
L^{*}_{f}(\boldsymbol{Q})= \sum_{R \in \mathcal{R}_{c}} L_{f}(\boldsymbol{Q},R) \mathbf{1}_{\{\boldsymbol{Q} \in D^c_f(R)\}}. 
\end{align}  



Let us consider the Lyapunov drift when $\boldsymbol{Q}(t) \in D^c_f(R)$ for some $R=(C_1,C_2, \ldots, C_M) \in \mathcal{R}_c$.
Since the collection of cones $\{ D^c_f(R) \}_{R \in \mathcal{R}_c}$ partitions $\mathbb{R}^{D}_+$, we define function 
$U_f: \Omega \times \mathbb{R}^{D}_+ \to \mathbb{R}_+$ such that  
\begin{align}
\label{DefineV}
U_f(k, \boldsymbol{Q}) = f(|C^{i-1}|,|C_i|) Q_{C_i}(t),
\end{align}    
where $\boldsymbol{Q} \in D^c_f(R)$, $R=(C_1,C_2,\ldots,C_M)$, and $k \in C_i$.

Let $\{\mu^*_{ij} (t)\}_{i,j \in \Omega}$ represent the routing decisions under C-ORCD, while 
$\{\hat{\mu}_{ij} (t)\}_{i,j \in \Omega}$ represent the routing decisions under D-ORCD.
  For ease of notation and exposition define $A_{C}(t)= \sum \limits_{i \in C} A_i(t)$, 
$\mu_{C,in}(t)=\sum_{k=1}^D \sum_{j \in C} \mu_{kj} (t) {\bf{1}}_{\{Q_k(t)\ge 1\}}$, and  
$\mu_{C,out}(t)=\sum \limits_{k \in C} \sum_{j=1}^D \mu_{kj}(t) {\bf{1}}_{\{Q_k(t)\ge 1\}}$.
We have,

\begin{align}
\nonumber
\lefteqn{L^{*}_{f}(\boldsymbol{Q}(t+1)) - L^{*}_{f}(\boldsymbol{Q}(t))} \\
\nonumber
 &\stackrel{(a)}{=} \sum_{i=1}^M f(|C^{i-1}|,|C_{i}|) \left[ Q^2_{C_i}(t+1) - Q^2_{C_i}(t) \right]  \\
 \nonumber
& \hspace{0.5in} + O(\left\| \boldsymbol{Q}(t+1) - \boldsymbol{Q}(t) \right\|^2)\\
\nonumber
&\stackrel{(b)}{=} - 2 \sum_{i=1}^M f(|C^{i-1}|,|C_{i}|) Q_{C_i}(t) \Big  ( \mu_{C_i,out}(t) - \mu_{C_i,in}(t) \\ 
\nonumber
& \hspace{0.5in}   - A_{C_i}(t) \Big ) + O(1) \\
\nonumber
& = - 2 \sum_{i=1}^M f(|C^{i-1}|,|C_{i}|) Q_{C_i}(t) {\bf{1}}_{\{Q_k(t)\ge 1\}} \\
\label{equal2}
& \hspace{0.5in}\Big (\sum_{j=1}^D \sum \limits_{k \in C_i} \mu_{kj}(t)- \sum_{k=1}^D  \sum \limits_{j \in C_i} \mu_{kj}(t) 
- A_{C_i}(t) \Big ) + O(1),
\end{align}
where $(a)$ follows from continuity and differentiability of $L^*_f$~\cite[Lemma~3]{Naghshvar09} 
and writing $L^*_f(\boldsymbol{Q}(t+1))$ in terms of its first-order Taylor expansion around $L^*_f(\boldsymbol{Q}(t))$, 
and $(b)$ follows from Fact~\ref{queuedynamic_ineq} below.

\begin{fact}[\cite{Naghshvar09}]
\label{queuedynamic_ineq}
Let $R=(C_1,C_2, \ldots, C_M) \in \mathcal{R}$ and $\boldsymbol{Q}(t) \in D_f(R)$. We have
\begin{align*}
\lefteqn{Q^2_{C_i}(t+1) - Q^2_{C_i}(t) =}\\
& \hspace{0.5in} \beta_f - 2 Q_{C_i}(t) (\mu_{C_i,out}(t) - \mu_{C_i,in}(t) - A_{C_i}(t)),
\end{align*}
where $\beta_f$ is a constant bounded real number.
\end{fact}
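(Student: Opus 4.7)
The plan is to obtain the claimed identity by directly squaring the aggregate one-step recursion for $Q_{C_i}(t)$ and absorbing the resulting quadratic correction into the bounded constant $\beta_f$. First, I would sum the per-node dynamics in (\ref{Qdynamic}) over $i \in C_i$. The indicators $\mathbf{1}_{\{Q_k(t) \geq 1\}}$ baked into the definitions of $\mu_{C_i,\text{in}}$ and $\mu_{C_i,\text{out}}$ ensure that no transmission occurs from an empty queue, which in turn makes the $[\,\cdot\,]^+$ operator in (\ref{Qdynamic}) inactive and removes the reflection at zero. Every packet that is relayed between two nodes both of which lie in $C_i$ contributes identically to $\mu_{C_i,\text{in}}(t)$ and to $\mu_{C_i,\text{out}}(t)$, so these intra-class flows cancel, yielding the clean aggregate recursion
\[
Q_{C_i}(t+1) \;=\; Q_{C_i}(t) \;+\; \mu_{C_i,\text{in}}(t) \;-\; \mu_{C_i,\text{out}}(t) \;+\; A_{C_i}(t).
\]

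Setting $\Delta(t) := \mu_{C_i,\text{in}}(t) - \mu_{C_i,\text{out}}(t) + A_{C_i}(t)$ and using the trivial algebraic identity $(Q+\Delta)^2 - Q^2 = 2Q\Delta + \Delta^2$, I would then obtain
\[
Q^2_{C_i}(t+1) - Q^2_{C_i}(t) \;=\; \Delta(t)^2 \;-\; 2 Q_{C_i}(t)\bigl(\mu_{C_i,\text{out}}(t) - \mu_{C_i,\text{in}}(t) - A_{C_i}(t)\bigr),
\]
which is exactly the stated form with the role of $\beta_f$ played by $\Delta(t)^2$. It remains to justify that $\beta_f$ can be treated as a constant bounded real number in the sense required later in (\ref{equal2}), i.e.\ that $\Delta(t)^2$ is uniformly bounded above.

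For that last step I would use the standing assumptions on the model: every routing decision satisfies $\mu_{ij}(t) \leq 1$ (at most one packet is shifted per directed pair per slot) and exogenous arrivals are bounded by $A_{\max}$ per node per slot. These immediately give $\mu_{C_i,\text{out}}(t) \leq |C_i|$, $\mu_{C_i,\text{in}}(t) \leq D$, and $A_{C_i}(t) \leq |C_i|A_{\max}$, so $|\Delta(t)| \leq D + |C_i|(1+A_{\max})$ uniformly in $t$ and in $\boldsymbol{Q}(t)$. Hence $\beta_f$ is bounded by a constant depending only on $D$ and $A_{\max}$, independent of the queue backlog, which is precisely the usage needed when this fact is invoked in step $(b)$ of (\ref{equal2}).

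The main (and only) delicate step is the verification of the aggregate recursion, because one must confirm that the $[\,\cdot\,]^+$ operator truly imposes no extra correction; this is where the indicators inside the definitions of $\mu_{C_i,\text{in/out}}$ do the work, and I would spell out the two cases $Q_k(t)=0$ and $Q_k(t)\geq 1$ to show no packet is ``created'' or ``lost'' by the reflection. Once this accounting is in place, the remainder is a one-line square expansion and a coarse uniform bound, so I anticipate no substantive obstacle.
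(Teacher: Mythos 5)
Your proposal is correct, but there is nothing in this paper to compare it against: the statement is imported verbatim as a Fact from \cite{Naghshvar09} and never proved here, so your derivation supplies the elementary argument the paper omits. Your accounting is the right one. Summing the per-node queue recursion over $k \in C_i$ and checking the two cases $Q_k(t)=0$ and $Q_k(t)\ge 1$ shows that $[Q_k(t)-\sum_j \mu_{kj}(t)]^{+} = Q_k(t) - \sum_j \mu_{kj}(t)\mathbf{1}_{\{Q_k(t)\ge 1\}}$, while the receiver-side indicator $\mathbf{1}_{\{Q_j(t)\ge \mu_{ji}(t)\}}$ in the dynamics coincides with $\mathbf{1}_{\{Q_j(t)\ge 1\}}$ whenever the corresponding term is nonzero; together these give the exact aggregate recursion $Q_{C_i}(t+1)=Q_{C_i}(t)+\mu_{C_i,in}(t)-\mu_{C_i,out}(t)+A_{C_i}(t)$, and squaring yields the identity with $\beta_f=\Delta(t)^2$. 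Two fine points deserve mention. First, $\beta_f=\Delta(t)^2$ is a bounded, time-varying random quantity, not literally a constant; you correctly observe that uniform boundedness is all the statement can mean and all that is used when the fact is absorbed into the $O(1)$ term in step $(b)$ leading to (\ref{equal2}). Second, your bound $\mu_{C_i,out}(t)\le |C_i|$ implicitly uses the per-transmitter constraint $\sum_j \mu_{kj}(t)\le 1$ (at most one packet's relaying responsibility is shifted per slot), not merely the pairwise bound $\mu_{kj}(t)\le 1$, which by itself only gives $\mu_{C_i,out}(t)\le |C_i|D$; either bound is a constant independent of the backlog, so the conclusion is unaffected, but the justification should invoke the right model assumption. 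Finally, your proof makes transparent something the statement obscures: the hypothesis $\boldsymbol{Q}(t)\in D_f(R)$ plays no role whatsoever, since the identity is pure queue-dynamics bookkeeping valid for any node set $C_i$; the cone structure only matters downstream, where $L^{*}_f$ restricted to the cone is the weighted sum of the $Q^2_{C_i}$ terms to which this fact is applied.
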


Finally from  (\ref{DefineV}) and (\ref{equal2}), 
\begin{align}
\nonumber
\lefteqn{L^{*}_{f}(\boldsymbol{Q}(t+1)) - L^{*}_{f}(\boldsymbol{Q}(t))} \\
\nonumber
& = - 2 \sum_{i=1}^M  \sum_{k\in C_i} \sum_{l=1}^D  U_f(k,Q) \mu_{kl} (t) {\bf{1}}_{\{Q_k(t)\ge 1\}} \\
\nonumber
& - \sum_{i=1}^M \sum_{l\in C_i} \sum_{k=1}^D U_f(l,Q) \mu_{kl} (t)  {\bf{1}}_{\{Q_k(t)\ge 1\}} - \sum_{i=1}^M A_{C_i}(t)  + O(1) \\
\nonumber
& = - 2 \sum_{k=1}^D \sum_{l=1}^D  U_f(k,Q) \mu_{kl} (t) {\bf{1}}_{\{Q_k(t)\ge 1\}} \\
\label{equality} 
& \hspace{0.2in} - \sum_{k=1}^D  \sum_{l=1}^D U_f(l,Q) \mu_{kl} (t)  {\bf{1}}_{\{Q_k(t)\ge 1\}} - \sum_{i=1}^M A_{C_i}(t)  + O(1). 
\end{align}

\begin{fact}[\cite{Naghshvar09}]
\label{Croute}
Routing decisions under C-ORCD are such that $\mu^*_{ij}=1$, only when $j \in S_i(t)$ and 
$U_f(j,\boldsymbol{Q}(t)) \le U_f(k,\boldsymbol{Q}(t))$ for all $k \in S_i(t)$.
\end{fact}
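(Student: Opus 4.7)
The plan is to derive Fact~\ref{Croute} by showing that on every cone $D^c_f(R)$ with $R=(C_1,\ldots,C_M)$, the ordering of nodes induced by the fixed-point congestion measure $V^*_\cdot(\boldsymbol{Q})$ refines the coarser partition $R$. Once this is in hand, C-ORCD's rule (pick a receiver minimizing $V^*$, retain otherwise) automatically picks a node in the lowest-indexed class $C_{i^*}$ that intersects $S_i(t)\cup\{i\}$, and such a node is simultaneously an $\argmin$ of $U_f(\cdot,\boldsymbol{Q})$ over $S_i(t)\cup\{i\}$, yielding the claim.

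First I would unpack the definitions. By the definition of C-ORCD, $\mu^*_{ij}(t)=1$ only when $j=\argmin_{k\in S_i(t)\cup\{i\}}V_k^*(t)$. Since $p_{ii}=1$ we always have $i\in S_i(t)$, and routing responsibility is shifted only to a node that acknowledged reception, so the constraint $j\in S_i(t)$ is immediate from the definition of a routing decision; the substantive content is the $U_f$-minimality assertion.

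The heart of the proof is the following structural lemma: for every $\boldsymbol{Q}\in D^c_f(R)$ with $R=(C_1,\ldots,C_M)\in\mathcal{R}^c$, and every $k\in C_i$, $\ell\in C_j$ with $i<j$, one has $V^*_k(\boldsymbol{Q})<V^*_\ell(\boldsymbol{Q})$. I would establish this by iterating the fixed-point equation (\ref{infrequent}): unrolling one level expresses $V^*_\ell$ as $Q_\ell/P^{(\ell,D)}$ plus a convex combination of $V^*_m$ for $m\in H^{(\ell,D)}$, and unrolling further yields a representation of each $V^*_m$ as a convex-combination of $Q_\cdot/P^{(\cdot,D)}$ terms running over successor classes $C_1,\ldots,C_{j-1}$ (path-connectedness of $R\in\mathcal{R}^c$ guarantees such unrollings terminate at $D\in C_1$). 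The explicit scale gap encoded by $f(m,n)=1/(K^m(K^n-1))$ with $K=1+1/p_{\min}$ is exactly tuned so that, combined with the cone inequalities defining $D^c_f(R)$, the $Q_\ell$ term dominates the aggregated contribution from $C^{j-1}$ and places $V^*_\ell$ above $V^*_k$ for any $k\in C_i\subseteq C^{j-1}$.

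Finally I would close the argument by noting that on $D^c_f(R)$ the function $U_f(\cdot,\boldsymbol{Q})$ takes the common value $f(|C^{i-1}|,|C_i|)Q_{C_i}$ on $C_i$, and by the same cone inequalities this sequence is strictly increasing in $i$. Hence $V^*$ and $U_f$ induce the same strict ordering across ranking classes: the $V^*$-minimizer over $S_i(t)\cup\{i\}$ lies in $C_{i^*}$, where $i^*=\min\{i:C_i\cap(S_i(t)\cup\{i\})\neq\emptyset\}$, and any element of $C_{i^*}\cap(S_i(t)\cup\{i\})$ is automatically a $U_f$-minimizer over $S_i(t)\cup\{i\}$. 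The main obstacle is the structural lemma in the previous paragraph: one has to verify that the cones $D^c_f(R)$ from \cite{Naghshvar09} are precisely those regions on which $V^*$ respects the partition $R$, and the bookkeeping of convex weights generated by repeatedly unrolling (\ref{infrequent}) against the geometric-gap inequalities built into $f$ is the delicate step where all of the quantitative design choices ($K$, the product form of $f$, path-connectedness) must be used simultaneously.
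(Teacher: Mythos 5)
The first thing to note is that the paper does not prove this statement at all: it is imported verbatim as a Fact from \cite{Naghshvar09}, and every object it involves (the function $f$, the cones $D^c_f(R)$, the function $U_f$) is likewise only referenced in this paper, never constructed. So there is no in-paper proof to compare against; what can be judged is whether your proposal stands as a self-contained proof, and there it has a genuine gap. Your reduction is the right one, and it matches the architecture of the argument in \cite{Naghshvar09}: if the fixed-point measures $V^*_\cdot$ order nodes consistently with the classes of $R$ on each cone $D^c_f(R)$, and if $U_f(\cdot,\boldsymbol{Q})$ is constant on each class and nondecreasing across classes there, then C-ORCD's rule $j=\argmin_{k\in S_i(t)\cup\{i\}}V^*_k$ immediately gives $U_f$-minimality over $S_i(t)$, which is the Fact. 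But the entire mathematical content sits in the structural lemma you defer, and the sketch you give for it does not go through as written. The unrolling of (\ref{infrequent}) is circular: the sets $H^{(\ell,D)}(t)$ over which you expand are themselves defined by the ordering of the $V^*$ values, which is precisely what the lemma is supposed to establish; asserting that the expansion "runs over successor classes $C_1,\ldots,C_{j-1}$" presupposes $H^{(\ell,D)}(t)\subseteq C^{j-1}$, i.e.\ the conclusion. Breaking this requires an additional device (e.g.\ an induction along the sorted order of the fixed-point solution, or a greedy characterization of the unique fixed point), which the proposal does not indicate.

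Second, every quantitative step you invoke --- "the cone inequalities defining $D^c_f(R)$," the monotonicity of $U_f$ across classes "by the same cone inequalities," the claim that the $Q_\ell$ term dominates the aggregated contribution from $C^{j-1}$ --- presupposes an explicit definition of the cones, which neither this paper nor your proposal supplies. You flag this yourself when you say one must "verify that the cones $D^c_f(R)$ from \cite{Naghshvar09} are precisely those regions on which $V^*$ respects the partition $R$"; but note the dilemma this creates: if the cones are defined as those regions, the structural lemma is a tautology and the unrolling is unnecessary, whereas if they are defined by explicit inequalities on $\boldsymbol{Q}$, the unrolling is the proof and it remains undone. As it stands, the proposal is a plausible reconstruction of how the cited proof is organized, not a proof; completing it requires importing the cone construction from \cite{Naghshvar09} and carrying out the fixed-point induction in full.
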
  
This fact together with (\ref{equality}) provides the proof of Fact~\ref{LyapStable}. 

\subsection{Proof of Lemma~\ref{BDiff}}
\label{app:drift}

We begin the proof of Lemma~\ref{BDiff} by stating following Claim.

\begin{claim}
\label{Droute}
Routing decisions under D-ORCD are such that $\hat{\mu}_{ij}=1$, only when $j \in S_i(t)$ and 
$U_f(j,\hat{\boldsymbol{Q}}(t)) \le U_f(k,\hat{\boldsymbol{Q}}(t))$ for all $k \in S_i(t)$,
where $\hat{\boldsymbol{Q}}(t)=\boldsymbol{\bar{Q}}(T(t))$. 
\end{claim}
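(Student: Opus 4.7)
The plan is to show that a full computation cycle of D-ORCD, once it has converged, is operationally indistinguishable from running the centralized C-ORCD rule on the frozen queue snapshot $\hat{\boldsymbol{Q}}(t)=\boldsymbol{\bar{Q}}(T(t))$; once this equivalence is in place, Claim~\ref{Droute} will follow from Fact~\ref{Croute} applied to $\hat{\boldsymbol{Q}}(t)$ in place of $\boldsymbol{Q}(t)$.

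First, I would pin down what sits in the actual routing table during the interval $[T(t),T(t)+T_c)$. By construction of D-ORCD in Section~\ref{design}, the quantities $\tilde{V}_k^{(i,d)}(T(t))$ used for routing decisions at time $t$ are the entries produced during the immediately preceding computation cycle, i.e., by running the distributed recursion (\ref{fixpointDec_d}) in which the local draining term $L_i^d$ is built from $\bar{Q}_i^{d'}(T(t))$, exactly the components of $\hat{\boldsymbol{Q}}(t)$. Since we have chosen $T_c=O(D)$, Lemma~\ref{converge} applies and guarantees that these entries have converged to a fixed point of (\ref{infrequent}), but with $Q_i$ replaced by $\hat{Q}_i$. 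Call these converged values $V_i^{*}(\hat{\boldsymbol{Q}}(t))$. Thus the D-ORCD next-hop rule reduces to
\begin{equation*}
K_{D\text{-}ORCD}^{(i,d)}(t)=\argmin_{k\in S_i(t)\cup\{i\}} V_k^{*}(\hat{\boldsymbol{Q}}(t)),
\end{equation*}
which is syntactically the C-ORCD rule evaluated on the backlog vector $\hat{\boldsymbol{Q}}(t)$.

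Next I would invoke Fact~\ref{Croute}. That fact is a statement about C-ORCD for an arbitrary backlog vector; its derivation relies only on the correspondence, established in \cite{Naghshvar09}, between the ordering of the fixed-point values $V_i^{*}(\boldsymbol{Q})$ and the ranking classes of the cone $D_f^c(R)$ containing $\boldsymbol{Q}$, together with the argmin structure of C-ORCD's decisions. Applying the same derivation verbatim, but with $\boldsymbol{Q}(t)$ replaced throughout by $\hat{\boldsymbol{Q}}(t)$, yields that the argmin over $V_k^{*}(\hat{\boldsymbol{Q}}(t))$ coincides with the argmin of $U_f(\cdot,\hat{\boldsymbol{Q}}(t))$. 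Combining this with the rewriting of $K_{D\text{-}ORCD}^{(i,d)}(t)$ above, any $j$ with $\hat{\mu}_{ij}(t)=1$ necessarily lies in $S_i(t)\cup\{i\}$ and satisfies $U_f(j,\hat{\boldsymbol{Q}}(t))\le U_f(k,\hat{\boldsymbol{Q}}(t))$ for all $k\in S_i(t)$, as claimed.

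The only real obstacle is the first step: making rigorous the statement that the table actually consulted at time $t$ stores $V^{*}(\hat{\boldsymbol{Q}}(t))$. Two small issues must be handled. (i) Lemma~\ref{converge} describes convergence \emph{during} a cycle, so I need to align bookkeeping so that the table used for routing during $[T(t),T(t)+T_c)$ is the one produced by the cycle ending at $T(t)$, whose link-cost inputs were precisely $\bar{Q}_i^d(T(t))$; this is an immediate consequence of the two-table (virtual/actual) design described in Section~\ref{design}. (ii) The convergence in Lemma~\ref{converge} is asserted with high probability, relying on instantaneous delivery of control packets, which is justified in the remark following Lemma~\ref{converge} because D-ORCD gives control packets strict priority and they bypass backoff; on the probability-zero event where convergence fails within a cycle, the resulting routing decisions contribute only a bounded additive term to the drift, which is absorbed into the constant $B''$ in Lemma~\ref{BDiff} and does not affect the statement of Claim~\ref{Droute} as an almost-sure description of D-ORCD's converged behavior.
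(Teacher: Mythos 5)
Your proposal is correct and follows the paper's intended justification: the paper itself states Claim~\ref{Droute} without an explicit proof, relying implicitly on exactly the chain you spell out --- Lemma~\ref{converge} identifies the routing-table entries consulted at time $t$ with the fixed point of (\ref{infrequent}) evaluated on the frozen averaged backlog $\hat{\boldsymbol{Q}}(t)=\boldsymbol{\bar{Q}}(T(t))$, so that D-ORCD's argmin rule becomes the C-ORCD rule on $\hat{\boldsymbol{Q}}(t)$, after which Fact~\ref{Croute} (from \cite{Naghshvar09}) transfers verbatim with $\hat{\boldsymbol{Q}}(t)$ in place of $\boldsymbol{Q}(t)$. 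Your extra care with the virtual/actual table bookkeeping and with the high-probability convergence event are refinements the paper glosses over, but they are consistent with, not a departure from, its argument.
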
  

With this, we are ready to proceed with the proof of Lemma~\ref{BDiff}. 

\begin{align}
\label{PrfThm2_01}
\nonumber
\lefteqn{ \mathbb{E} [ L^{*}_{f}(\boldsymbol{Q}^{\hat{\pi}}(t+1))|\boldsymbol{Q}(t)] - \mathbb{E} [ L^{*}_{f}(\boldsymbol{Q}^{\pi^*}(t+1))|\boldsymbol{Q}(t)]} \\
\nonumber
&= \mathbb{E} \bigg[ 2 \sum_{k=1}^D \sum_{l=1}^D (\hat{\mu}_{kl}(t) - \mu^*_{kl}(t)) \Big(U_f(k,\boldsymbol{Q}(t)) \\ 
& \hspace{0.5in} - U_f(l,\boldsymbol{Q}(t) \Big) {\bf{1}}_{\{Q_k(t)\ge 1\}} | \boldsymbol{Q}(t) \bigg] + O(1), 
\end{align}
where equality follows from (\ref{equality}). 

Suppose node $i$'s transmission at time $t$ is received by potential forwarders $S_i(t)$.
Furthermore, suppose that nodes $a,b \in S_i(t)$ are the nodes with the highest rank under C-ORCD and D-ORCD respectively, i.e. $\mu^*_{ia}(t)=\hat{\mu}(t)_{ib}=1$. 
From Fact~\ref{Croute} and Claim~\ref{Droute}, we have
\begin{eqnarray}
\label{InfreqClaim_01}
U_f(a,\boldsymbol{Q}(t)) &\le& U_f(b, \boldsymbol{Q}(t)), \\ 
\label{InfreqClaim_02}
U_f(a,\hat{\boldsymbol{Q}}(t)) &\ge& U_f(b,\hat{\boldsymbol{Q}}(t)).
\end{eqnarray}
In order to prove Lemma~\ref{BDiff} it suffices to show that 
\begin{eqnarray}
\label{InfreqClaim_03}
U_f(b,\boldsymbol{Q}(t)) - U_f(a, \boldsymbol{Q}(t)) = O(\| \boldsymbol{Q}(t) - \hat{\boldsymbol{Q}}(t) \|).
\end{eqnarray}

Consider the line that connects $\boldsymbol{Q}(t)$ and $\hat{\boldsymbol{Q}}(t)$ in $\mathbb{R}^{D}_+$. Suppose this line goes through $M-1$ cones in $\mathbb{R}^{D}_+$. Let $Z_1,Z_2,\dots,Z_M$ be respectively the intersection of the line connecting $\boldsymbol{Q}(t)$ to $\hat{\boldsymbol{Q}}(t)$ with the $M$ separating hyperplanes of the $M-1$ cones between them, i.e.\
\begin{align}
\nonumber
\lefteqn{\| \boldsymbol{Q}(t) - \hat{\boldsymbol{Q}}(t) \|} \\
\nonumber
&= \| \boldsymbol{Q}(t) - \boldsymbol{Z_1} \| + \| \boldsymbol{Z_1} - \boldsymbol{Z_2} \| + \cdots + \| \boldsymbol{Z_M} -  \hat{\boldsymbol{Q}}(t) \|.
\end{align}

Note that since $Z_1,Z_2,\dots,Z_M$ are on the hyperplanes, every two consecutive points in set $\{ \boldsymbol{Q}(t), Z_1,Z_2, \ldots, Z_M, \hat{\boldsymbol{Q}}(t) \}$ can be considered to belong to the same cone, and hence, have same rank ordering of the nodes. From definition of function $U_f$, we obtain
{\small{ \begin{align*}
      | U_f(a,\boldsymbol{Q}(t)) - U_f(a,\boldsymbol{Z_1})| &= O( \| \boldsymbol{Q}(t) - \boldsymbol{Z_1} \|),  \\
      | U_f(a,\boldsymbol{Z_m}) - U_f(a,\boldsymbol{Z_{m+1}})| &= O( \| \boldsymbol{Z_m} - \boldsymbol{Z_{m+1}} \|), \ 1 \le m \le M, \\
      | U_f(a,\boldsymbol{Z_M}) - U_f(a,\hat{\boldsymbol{Q}}(t))| &= O( \| \boldsymbol{Z_M} - \hat{\boldsymbol{Q}}(t) \|).                   
\end{align*} }} 
Therefore,
\begin{align}
\nonumber
\lefteqn{U_f(a,\boldsymbol{Q}(t)) - U_f(a,\hat{\boldsymbol{Q}}(t))} \\
\nonumber
&=  [U_f(a,\boldsymbol{Q}(t)) - U_f(a,\boldsymbol{Z_1})] + [U_f(a,\boldsymbol{Z_1}) - U_f(a,\boldsymbol{Z_2})] \\
\nonumber
& \ \ + \cdots + [U_f(a,\boldsymbol{Z_M}) - U_f(a,\hat{\boldsymbol{Q}}(t))] = O(\| \boldsymbol{Q}(t) - \hat{\boldsymbol{Q}}(t) \|).
\end{align}
 
We can derive the same result for all other nodes in the network. In other words, there exist constants $\eta_a, \eta_b$ such that
\begin{eqnarray}
\label{InfreqClaim_06}
U_f(a,\boldsymbol{Q}(t)) &=& U_f(a,\hat{\boldsymbol{Q}}(t)) + \eta_a \| \boldsymbol{Q}(t) - \hat{\boldsymbol{Q}}(t) \|, \\
\label{InfreqClaim_07} 
U_f(b,\boldsymbol{Q}(t)) &=& U_f(b,\hat{\boldsymbol{Q}}(t)) + \eta_b \| \boldsymbol{Q}(t) - \hat{\boldsymbol{Q}}(t) \|.
\end{eqnarray}

However, (\ref{InfreqClaim_01}), (\ref{InfreqClaim_02}), (\ref{InfreqClaim_06}), and (\ref{InfreqClaim_07}) imply that $\eta_a \le \eta_b$ and
\begin{align}
\label{InfreqClaim_08}
U_f(b,\boldsymbol{Q}(t)) - U_f(a,\boldsymbol{Q}(t)) \le (\eta_b - \eta_a) \| \boldsymbol{Q}(t) - \hat{\boldsymbol{Q}}(t) \|. 
\end{align}

With this, the proof is now complete.


\bibliographystyle{IEEEbib}
\bibliography{ORCD_journal}

\end{document}